\newtheorem{theorem}{Theorem}[section]
\newtheorem{lemma}[theorem]{Lemma}
\newtheorem{proposition}[theorem]{Proposition}
\newcommand\norm[1]{\left\lVert#1\right\rVert}
\newcommand\inner[1]{\left<#1\right>}
\newcommand\vectorize{\mbox{\textbf{vec}}}
\newcommand\reshape{\mbox{\textbf{reshape}}}
\newcommand{\figref}[2][{}]{\hyperref[#2]{\figurename~\ref{#2}#1}} 
\newcommand{\lemmaref}[1]{Lemma (\ref{#1})}
\newcounter{bla}
\journal{Journal}
\begin{document}

\begin{frontmatter}

\title{Approximation of the Nearest Classical-Classical State to a Quantum State}

\author[a]{BingZe-Lu}
\author[a]{Matthew M. Lin}
\author[a]{YuChen-Shu\corref{author}}

\cortext[author] {Corresponding author.\\\textit{E-mail address:} ycshu@mail.ncku.edu.tw}
\address[a]{Department of Mathematics, National Cheng Kung University, National Cheng Kung University
No. 1, Dasyue Rd., Tainan City 70101, Taiwan}

\begin{abstract}
    The capacity of quantum computation exceeds that of classical computers. A revolutionary step in computation is driven by quantumness or quantum correlations, which are permanent in entanglements but often in separable states; therefore, quantifying the quantumness of a state in a quantum system is an important task. The exact quantification of quantumness is an NP-hard problem; thus, we  consider alternative approaches to approximate it. In this paper, we take the Frobenius norm to establish an objective function and propose a gradient-driven descent flow on Stiefel manifolds to determine the quantity. We show that the objective value decreases along the flow by proofs and numerical results. Besides, the method guarantees the ability to decompose quantum states into tensor products of certain structures and maintain basic quantum assumptions. Finally, the numerical results eventually confirm the applicability of our method in real-world settings. 
\end{abstract}



\end{frontmatter}



\section{Introduction}
Quantum computers challenge Moore's law, and their computational power is beyond that of classical computers. Developing powerful features and algorithms for quantum computers are thriving; notable research topics include teleportation \cite{Stefano15}, computation \cite{Cerezo21}, information theory \cite{Horodecki21}, and resource theory \cite{Chitambar19}, and so forth. Within the mentioned topics, entanglement contributes substantially owing to its nonnegligible ability to carry massive data, thereby speeding up algorithms and communication in computation and teleportation. However, entanglement is not always necessary~\cite{Knill98,biham04,Lanyon08}; separability can also increase the capacity in computation and information processing. 

Entanglement is a physical phenomenon in which particles in a quantum system come together and share the same information simultaneously. Moreover, when entangled, the states of each particle are closely correlated and cannot be described independently of the state of other particles, which can be embedded in "quantum correlation." Take, for example, Einstein-Podolsky-Rosen (EPR) pairs which are the key components of communication\cite{Gisin07,Mattle96} are well-known entangled states in a bipartite system. 

A separable state can fully portray information in composite systems. Separable states are considered to be states of classical correlation because they can be prepared through local operations and classical communication (LOCC), and local measurement operators can precisely understand the information in each joint subsystem. Even though separable states exhibit "classical correlation," some can occasionally accommodate "quantumness."

Detection, certification, and quantification of quantumness are related research topics. Since quantum entanglement possesses a high degree of quantumness, many research topics are desired to determine whether a state is entangled. There are some known criteria \cite{Peres96,Chen03, Rudolph03, Albeverio07} in a bipartite system but still elusive in a multipartite system. Furthermore, quantifying the degree of entanglement through relative entropy\cite{Vedral97} is another approach to understanding. 
Besides, some separable states possess quantumness. Take quantum discord for example; in 2001, Ollivier, Harold, and Zurek first developed a concept of a separable state in a system that exhibits quantum correlation \cite{Ollivier01}. In addition, Henderson and Vedral conducted entropy-based quantification \cite{Henderson01}. 
Identifying whether a state exhibits quantumness or classicality becomes worth discussing. In \cite{Groisman07}, Groisman, Kenigsberg, and Mor defined classical states and provided a method to gauge the degree of "quantumness" (or "non-classicality") of a given state. 
However, this is not an easy task. Determining the separability of a given state is a primary concern before understanding the degree of classicality of a state. Even in a bipartite system, the separability-determined problem was proven NP-hard by Gurvits \cite{Gurvits04, Gharibian10}, showing this problem cannot be tackled by a nondeterministic Turing machine in a polynomial time.

Quantifying nonclassicality or quantumness is difficult, but perhaps possible to approximate it. Viewing the problem as an optimization problem, Chu and Lin\cite{Moody2022} first proposed a projected gradient method to approximate a given state by a linear combination of unit vectors over complex variables. 
However, in quantifying classicality, classical-classical states further require orthogonal matrices in each subsystem, which Stiefel manifolds can characterize and facilitates orthonormal-matrix-based optimization problems.

The paper is organized as follows, Section 2 reviews mathematical notations and the foundations of quantum mechanics. It also outlines the structure and properties of the Stiefel manifold. The main problem is addressed using mathematical formation; details are provided at the end of the section. In Section 3, we describe the steps in solving the optimization problem which involves computing the function gradient on the manifold and establishing a descent flow. We ensure the descent flow is convergent and the trajectory stays on Stiefel manifolds through rigorous proofs. In Section 4, numerical results are provided to demonstrate the ability to decompose a quantum state into orthogonal matrices and the stability of our method. Moreover, through numerical results, the proposed method preserves the fundamental properties of quantum states. Finally in this paper, we suggest an optimal selection of the initial guesses to the descent flow.    

\section{Preliminaries}
\subsection{Basics of quantum machanics}
In this section, we introduce some fundamental concepts of quantum computation with mathematical notation and address the motivation of our study. Readers can refer to \cite{Nielsen00} for further details. In quantum computation, qubits are relied upon for storing and manipulating information, which is analogous to 'bits' in classical computation. Appropriate mathematical notation can simplify discussions and improve reader understanding. A quantum system is considered as a Hilbert space $\mathcal{H}$ and a state of a qubit is represented as a unit vector $\mathbf{x}\in\mathcal{H}$. Real or complex spaces can be discussed for $\mathcal{H}$, but we consider only real spaces in the present study.\\

\subsubsection{Separable State versus Entangle State}
In a composite system, states can be classified as separable or entangled. Take, a bipartite system $\mathcal{H}_{AB} = \mathcal{H}_{A}\otimes \mathcal{H}_{B}$ for example. A separable state follows the representation,
\begin{equation}\label{eq:sep}
    \rho = \sum_{i}p_i \rho_A^i\otimes \rho_B^i,
\end{equation}
where $\rho_A^i$ and $\rho_B^i$ are density matrices in $\mathcal{H}_{A}$ and $\mathcal{H}_{B}$ respectively, and $p_i$ is some probability distribution. In other words, an entangled state cannot be written in the form \eqref{eq:sep}. The multipartite system follows analogously.
\subsubsection{Classical States}
Some separable states exhibit quantumness. Quantifying the degree of quantumness or non-classicality is a critical also important task in quantum information theory. 
In an isolated system $\mathcal{H}_{A}$, a computational basis is said to be classical if it is orthonormal \cite{Groisman07}. Moreover, in a bipartite system, $\mathcal{H}_{AB} = \mathcal{H}_{A}\otimes \mathcal{H}_{B}$, a state $\rho_{AB}$ is said to be classical-classical state if $\rho_{AB} = \displaystyle\sum_{i,j} p_{ij}\mathbf{u}_i\mathbf{u}_i^T\otimes \mathbf{v}_j\mathbf{v}_j^T$,
where $\left\{\mathbf{u}_i\right\}$ and $\left\{\mathbf{v}_i\right\}$ are orthonormal basis in $\mathcal{H}_A$ and $\mathcal{H}_B$ respectively. 
\subsection{Distancing}
For the sake of quantifying the "quantumness" or "non-classicality" of a given state, the similarity between the two states must be defined. Different inquiries require different quantifying approaches; we focus on the Frobenius norm to gauge the similarity between states. Suppose $\rho$ and $\sigma$ are two given states in a quantum system. The 
Frobenius norm is given as follows:
\begin{equation*}
    D_{F}(\rho, \sigma) = \norm{\rho-\sigma}_F = \sqrt{\sum_{i,j} (\rho_{i,j}-\sigma_{i,j})^2}.
\end{equation*}
Therefore, the "quantumness" or "non-classicality" of $\rho$ can be determined as:
\begin{equation*}
    \mathcal{Q}(\rho) = \min_{\sigma\in \mathcal{D}}D_{F}(\rho, \sigma), 
\end{equation*}
where $\mathcal{D}$ is a collection of classical-classical states. A state is said to be classical-classical if $\mathcal{Q}(\rho) = 0$; therefore such a quantum state inherits no "quantumness" from a quantum system.



\subsection{Stiefel Manifold}
The classical-classical formulation requires orthogonal matrices, and the related minimization problem can be discussed on Stiefel manifolds. Stiefel manifold characterizes the geometric details of orthonormal matrices and benefits solving those optimization problems with orthogonal constraints \cite{Edelman1998}. A proper coordinate system on a manifold facilitates designing and applying numerical algorithms by framing abstract objects as concrete objects. When a coordinate system finely characterizes the structure of the Stiefel manifold, many related optimization problems can be resolved by geometric quantities (e.g., gradient, curvature).\\
\subsubsection{Geometric Properties}
The Stiefel manifold $\mathcal{S}_{m,n} = \left\{Y\in\mathbb{R}^{m \times }\rvert Y^T Y = \mathbf{I}_{n\times n}, m\geq n\right\}$ collects the orthonormal matrices in $\mathbb{R}^{m \times n}$. Embedding $\mathcal{S}_{m,n}$ into $mn$-dimensional Euclidean space $\mathbb{R}^{mn}$ is a useful strategy to study it's structure. Furthermore, the tangent and normal space at point $Y$ are given as follows:
\begin{itemize}
    \item (Tangent Space) $\mathcal{T}_{Y}\mathcal{S}_{m,n} = \left\{X\in \mathbb{R}^{m\times n}\rvert X^TY+Y^TX = 0\right\}$,\\
    \item (Normal Space) $\mathcal{N}_{Y}\mathcal{S}_{m,n} = \left\{YS\lvert S\mbox{ is any } n\mbox{-by-}n \mbox{ symmetric matrix} \right\}$.
\end{itemize}

\subsubsection{Riemannian Gradient}
The gradient of a function on a Riemannian manifold must be derived for some optimization methods, such as the conjugate gradient method, the gradient flow method, and so on. The computation of the Riemannian gradient of a function depends on the choice of metrics. Readers may refer to \cite{Lee19} for further details. Herein, the Riemannian gradient of a real-valued function $F$ on the tangent space of $\mathcal{S}_{m,n}$ at point $Y$ is a tangent vector $\mbox{Grad}F\rvert_Y\in \mathcal{T}_{Y}\mathcal{S}_{m,n}$ with the canonical metric $g_c(X_1, X_2) = \inner{X_1, (I-\frac{1}{2}YY^T)X_2}$, where $X_1, X_2\in \mathcal{T}_{Y}\mathcal{S}_{m,n}$, is given as
\begin{equation}\label{eq:gradient}
    \mbox{Grad}F\rvert_Y = 2\mbox{skew} \left(\left(\nabla  F\left|\right._Y\right)Y^T\right)Y,
\end{equation}
where $\nabla F\big\rvert_Y$ is the Fr\'echet derivative of $F$ at $Y$, and $\mbox{skew}(A) = \frac{1}{2}\left(A-A^T\right)$.
\section{Problem Description}
    Quantifying the quantumness in a bipartite system is equivalent to determining whether a given state can be decomposed into orthogonal matrices. The idea motivated us to find the nearest classical-classical state to a given state in a real finite dimensional bipartite system, $\mathcal{H}_A\otimes\mathcal{H}_B$.
    To simplify the discussion, we consider the classical-classical state without the interaction of vectors. In other words, we only discuss the classical-classical state with the following representation,
    \begin{eqnarray}\label{eq:sigma}
        \sigma &=& \sum_{i=1}^N\theta_i \left(\mathbf{x_i}\mathbf{x_i}^T\right)\otimes\left(\mathbf{\mathbf{y_i}}\mathbf{\mathbf{y_i}}\right)^T,\\
        &=& \sum_{i=1}^N\theta_i  \left(\mathbf{\mathbf{x_i}}\otimes \mathbf{\mathbf{y_i}}\right) \left(\mathbf{x_i}\otimes \mathbf{\mathbf{y_i}}\right)^T.\\
    \end{eqnarray}
    where $\theta_i$ are nonnegative real numbers and with unit sum; $\left\{\mathbf{x_i}\right\}\subset \mathcal{H}_A=\mathbb{R}^n$, and  $\left\{\mathbf{y_i}\right\}\subset \mathcal{H}_B=\mathbb{R}^m$ are orthonormal sets.
    The general classical-classical state can also be discussed by extending the $M$ matrix and $N$ matrix in \lemmaref{lemma:MN_rep} to larger matrices that store $M$ and $N$ matrix in the diagonal respectively. Therefore our problem can be rephrased as minimizing the following norm,
    \begin{equation}\label{eq:vector_form}
        \norm{\rho-\sum_{i=1}^N\theta_i \left(\mathbf{x_i}\otimes \mathbf{y_i}\right) \left(\mathbf{x_i}\otimes \mathbf{y_i}\right)^T}_F^2.
    \end{equation}
    Using the Kahati-Rao product of $U \mbox{ and } V$ to simplify the classical-classical formulation:
    \begin{eqnarray}
        \left(U\odot V\right)_i = \left(x_i\otimes y_i\right),
    \end{eqnarray}
    where the column of the Kahati-Rao product stores the tensor product of $x_i$ and $y_i$.
    The degree of quantumness is to minimize the constrained optimization problem,
    \begin{eqnarray}\label{eq:objective_function}
        &&F(U, V, \Sigma) = \min_{\Sigma ,U\in\mathbb{R}^{n\times N}, V\in\mathbb{R}^{m\times N}} \dfrac{1}{2}\norm{\rho-(U\odot V)\Sigma (U\odot V)^T}_F^2\\
        && \mbox{ s.t }  \sum_{i=1}^N \theta_i = 1, \mbox{ and } \theta_i\in(0,1],\\
        && U^T U = \mathbf{I}_{N\times N}, V^T V = \mathbf{I}_{N\times N},
    \end{eqnarray}
    where $U = [\mathbf{x_1}, \mathbf{x_2}, \cdots \mathbf{x_N}]$ and $V = [\mathbf{y_1}, \mathbf{y_2}, \cdots \mathbf{y_N}]$, and $\mathbf{I}_{N\times N}$ is the $N$-by-$N$ identity matrix.

\section{Optimization}
The gradient descent method is widely considered to solve optimization problems. Herein, we establish
 a dynamic system that is inspired by the gradient descent method, to solve the aforementioned optimization problem\eqref{eq:objective_function}. 
    First, we compute the gradient of the objective function on variable $U$ and $V$, $\nabla_{U, V} F = \left(\dfrac{\partial F}{\partial U}, \dfrac{\partial F}{\partial V}\right)$, and project it onto Stiefel manifolds to establish descent flows. Subsequently, the descent flows to  $\theta_i$ are obtained by taking the function gradient of $\theta_i$ to satisfy the constraint. Finally, the minimizer of \eqref{eq:objective_function} is obtained by solving the dynamic system.\\
    \subsection{Gradients of Objective Function}
    The gradient of a function on a Stiefel manifold is formulated by computing each variable's derivative and then transforming them to the Riemannian gradient on the manifold. \\
    The following lemma presents an explicit representation of the partial derivatives of each variable. 
    \begin{lemma}\label{lemma:MN_rep}
    Let $M$, and $N$ satisfy the following representation,
        \begin{equation*}
        \left\{
            \begin{array}{cc}
                \vectorize(X\odot V) = M \vectorize(X),\\
                \\
                \vectorize(U\odot Y) = N \vectorize(Y).\\
            \end{array}
        \right.
    \end{equation*}
        The gradient $\nabla F$ of the objective function \eqref{eq:objective_function} is given as,
        \begin{equation}
            \left\{
                \begin{array}{ccl}
                    \dfrac{\partial F}{\partial U} & = &  -2\left(\tilde{M}-\hat{M}\right),\\
                    \\
                    \dfrac{\partial F}{\partial V} & = &  -2\left(\tilde{N}-\hat{N}\right),\\
                    \\
                    \dfrac{\partial F}{\partial \theta_i} & = &\theta_i-\inner{(\mathbf{x}_i\otimes \mathbf{y}_i)(\mathbf{x}_i \otimes \mathbf{y}_i)^T,\rho},
                \end{array}
                \hspace{10pt} i\in \left\{1\cdots N\right\},
            \right.
        \end{equation}
     where 
    \begin{equation*}
        \left\{
            \begin{array}{ccc}
                \tilde{M} & = & \reshape \left(M^T \vectorize(\rho(U\odot V)\Sigma), [n, N]\right),\\
                \\
                \hat{M} & = & \reshape \left(M^T \vectorize\left((U\odot V)\Sigma^2\right), [n, N]\right),\\
                \\
                \tilde{N} & = & \reshape \left(N^T \vectorize\left(\rho (U\odot V\right)\Sigma), [m, N]\right),\\
                \\
                \hat{N} & = & \reshape \left(N^T \vectorize\left((U\odot V)\Sigma^2\right), [m, N]\right),\\
            \end{array}
        \right.
    \end{equation*}
    
    \end{lemma}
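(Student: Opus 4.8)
The plan is to compute each partial derivative directly from the Frobenius-norm objective, exploiting the identity $\|A\|_F^2 = \vectorize(A)^T\vectorize(A)$ together with the defining relations for $M$ and $N$. First I would expand
$F = \tfrac12\|\rho\|_F^2 - \inner{\rho,(U\odot V)\Sigma(U\odot V)^T} + \tfrac12\|(U\odot V)\Sigma(U\odot V)^T\|_F^2$,
which isolates a constant term, a bilinear term, and a quartic term in the factors; since $\rho$ and $(U\odot V)\Sigma(U\odot V)^T$ are symmetric, the cross term in the quartic expansion collapses to $\tfrac12\trace{(U\odot V)\Sigma(U\odot V)^T(U\odot V)\Sigma(U\odot V)^T}$. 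The derivative with respect to $\theta_i$ is then immediate: differentiating the middle term gives $-\inner{(\mathbf{x}_i\otimes\mathbf{y}_i)(\mathbf{x}_i\otimes\mathbf{y}_i)^T,\rho}$, and differentiating the quartic term, using $\trace{K\Sigma K\Sigma}$ with $K = (U\odot V)^T(U\odot V) = \mathbf{I}$ by the orthonormality constraints (so that only the diagonal $\theta_i^2$ survives), yields $\theta_i$. This gives the third line.

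For $\partial F/\partial U$ I would hold $V$ and $\Sigma$ fixed and write the objective as a function of $X := U$ through the map $X\mapsto (X\odot V)\Sigma(X\odot V)^T$. Using $\vectorize(X\odot V) = M\vectorize(X)$ converts this into a quadratic-in-$M\vectorize(X)$ expression: the bilinear term becomes $\vectorize(X)^T M^T \vectorize(\rho(X\odot V)\Sigma)$ after reshaping, and its $X$-gradient, reshaped back to $[n,N]$, is exactly $\tilde M$ up to the factor; similarly the quartic term contributes $\hat M$ via $M^T\vectorize((U\odot V)\Sigma^2)$, picking up the factor $2$ from the square. Collecting signs gives $\partial F/\partial U = -2(\tilde M - \hat M)$. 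The computation for $V$ is entirely symmetric, swapping the roles of $U\odot Y$ and $X\odot V$ and using $N$ in place of $M$, and the reshape dimension $[m,N]$ in place of $[n,N]$.

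The step I expect to be the main obstacle is bookkeeping the $\vectorize$/$\reshape$ bracket manipulations carefully — in particular, verifying that $M$ is genuinely the correct linear operator representing $X\mapsto X\odot V$ in vectorized form and that differentiating $g(M\vectorize(X))$ produces $M^T\nabla g$ reshaped to $[n,N]$, while also confirming that the symmetry of $\rho$ and of $(U\odot V)\Sigma(U\odot V)^T$ is what lets the two halves of the cross term coalesce without an extra factor. A secondary subtlety is that $\partial F/\partial U$ here is the Euclidean (Fr\'echet) derivative $\nabla F|_U$, not yet the Riemannian gradient; one must be careful that the orthonormality constraints $U^TU = \mathbf{I}$ are \emph{not} imposed at this stage (they enter later through \eqref{eq:gradient}), so the expression for $\partial F/\partial U$ is the unconstrained gradient of the extended objective. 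Once these vectorization identities are pinned down, the rest is routine linear algebra.
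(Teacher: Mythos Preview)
Your approach is correct and essentially mirrors the paper's: both compute the unconstrained Fr\'echet derivative via the product/chain rule, use the symmetry of $\rho$ and of $(U\odot V)\Sigma(U\odot V)^T$ to merge the two product-rule halves into a single term with coefficient $2$, and then invoke the $\vectorize$/$\reshape$ machinery with $M,N$ to identify the gradients $\tilde M,\hat M,\tilde N,\hat N$. One small caveat on your closing remark: although the Riemannian projection indeed enters only later via \eqref{eq:gradient}, the stated form of $\hat M$ already relies on $(U\odot V)^T(U\odot V)=\mathbf I$ (without it the quartic term yields $\reshape(M^T\vectorize((U\odot V)\Sigma(U\odot V)^T(U\odot V)\Sigma),[n,N])$ instead), so the derivative is unconstrained but \emph{evaluated} at a feasible $(U,V)$ --- exactly as the paper does.
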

    Note that "$\vectorize$" converts a matrix into a column vector by stacking the columns of a given matrix on top of one another, and "
    $\reshape$" transforms a matrix into another matrix of a given shape in which elements are taken and restored column-wisely from the original matrix. \\
    \begin{proof}
        
        First, consider the Fr\'echet derivative of \eqref{eq:objective_function} to $U$ and  $ V$. \\
        Let 
        \begin{eqnarray*}
            F(U,V, \Sigma) = \dfrac{1}{2}\norm{\rho-(U\odot V)\Sigma (U\odot V)^T}_F^2 = \dfrac{1}{2}\norm{\beta(U,V, \Sigma)}_F^2,
        \end{eqnarray*}
        where $\beta (U,V, \Sigma) = \rho-(U\odot V)\Sigma(U\odot V)^T$.\\
        
        By the chain rule and the product rule, the partial Fr\'echet derivatives of $F$ at $U$ and $V$ acting on matrices $X\in\mathbb{R}^{n\times N}$ and $Y\in\mathbb{R}^{m\times N}$ respectively are given as follows,
        \begin{equation*}
            \left\{
                \begin{array}{ccl}
                    \dfrac{\partial F}{\partial U}\bullet(X) 
                    & = &  \left<\beta(U,V), -(U\odot V)\Sigma(X\odot V)^T-(X\odot V)\Sigma(U\odot V)^T \right>,
                    \\
                    \\
                    \dfrac{\partial F}{\partial V}\bullet(Y) 
                    & = &  \left<\beta(U,V), -(U\odot V)\Sigma(U\odot Y)^T-(U\odot Y)\Sigma(U\odot V)^T\right>,
                \end{array}
            \right.
        \end{equation*}
        
        Through direct calculation, we have
        \begin{eqnarray*}
            \dfrac{\partial F}{\partial U}\bullet(X) &=& -2\left<\rho,(X\odot V)\Sigma(U\odot V)^T\right>\\
                &+& 2\left< (X\odot V)\Sigma(U\odot V)^T, (U\odot V)\Sigma(U\odot V)^T\right>,\\
                &=& \left<-2(\tilde{M}-\hat{M}), X\right>,
        \end{eqnarray*}
        and 
        \begin{eqnarray*}
            \dfrac{\partial F}{\partial V}\bullet(X) &=& -2\left<\rho, (U\odot Y)\Sigma(U\odot V)^T\right>\\
                &+& 2\left<(U\odot Y)\Sigma(U\odot V)^T, (U\odot V)\Sigma(U\odot V)^T\right>,\\
                &=& \left<-2(\tilde{N}-\hat{N}), Y\right>,
        \end{eqnarray*}       
        Regarding the partial derivative $\dfrac{\partial F}{\partial \Sigma}$, because $\Sigma$ is a diagonal matrix, we only consider $\dfrac{\partial F}{\partial \theta_i}$. \\
        Recalling \eqref{eq:objective_function}, the partial derivative of $F$ with respect to $\theta_i$ is given as,
        \begin{equation*}
            \dfrac{\partial F}{\partial \theta_i} = -\inner{(\mathbf{x_i}\otimes \mathbf{y_i})(\mathbf{x_i} \otimes \mathbf{y_i})^T,\rho-\sum_{j=1}^N\theta_j \left( \left(\mathbf{x_j}\otimes \mathbf{y_j}\right) \left(\mathbf{x_j}\otimes \mathbf{y_j}\right)^T\right)}.
        \end{equation*}
        Because $(\mathbf{x_i}\otimes \mathbf{y_i})$ are mutually orthogonal, $\dfrac{\partial F}{\partial \theta_i}$ becomes
        \begin{equation*}
            \dfrac{\partial F}{\partial \theta_i} = \theta_i
            -\inner{(\mathbf{x_i}\otimes \mathbf{y_i})(\mathbf{x_i} \otimes \mathbf{y_i})^T,\rho},
        \end{equation*}
        which completes the derivation.
    \end{proof}
    As \eqref{eq:objective_function} is restricted on the Stiefel manifolds where $U\in\mathcal{S}_{n,N}$ and $ V\in\mathcal{S}_{m,N}$. Therefore, we compute the Riemannian gradients $\left(\mbox{Grad}_U F, \mbox{Grad}_V F\right)$ of our objective function. We project the partial derivative of $U$ and $V$ to a matrix on the tangent space of each Stiefel manifold to specify the relevant Riemannian gradient respectively. The following lemma provides further details.
        \begin{lemma}
            The Riemannian gradients $ \left(\mbox{Grad}_U F, \mbox{Grad}_V F\right) = \left(\mbox{Proj}_{\mathcal{S}_{n, N}}\dfrac{\partial F}{\partial U}, \mbox{Proj}_{\mathcal{S}_{m, N}}\dfrac{\partial F}{\partial V}\right)$ on $\mathcal{T}_U\mathcal{S}_{n,N}$ and $\mathcal{T}_V\mathcal{S}_{m,N}$ respectively follow the expressions, 
            \begin{equation}\label{eq:proj}
                \left\{
                    \begin{array}{ccl}
                        \mbox{Proj}_{\mathcal{S}_{n, N}}\dfrac{\partial F}{\partial U} & = & 2\mbox{ skew }\left(\left(\dfrac{\partial F}{\partial U}\right)U^T\right)U,\\
                        \\
                        \mbox{Proj}_{\mathcal{S}_{m, N}}\dfrac{\partial F}{\partial V} & = & 2\mbox{ skew }\left(\left(\dfrac{\partial F}{\partial V}\right)V^T\right)V.\\
                    \end{array}
                \right.
            \end{equation}
        \end{lemma}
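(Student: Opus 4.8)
The plan is to treat this lemma as the specialization of the closed form \eqref{eq:gradient} to the two Stiefel factors appearing in \eqref{eq:objective_function}, so that the bulk of the work is already available. First I would note that, with $\Sigma$ held fixed, the feasible set for $(U,V)$ is the product manifold $\mathcal{S}_{n,N}\times\mathcal{S}_{m,N}$ carrying the product of the two canonical metrics $g_c$; for a product Riemannian manifold the gradient of a scalar function splits as the pair of gradients of the partial maps $U\mapsto F(U,V,\Sigma)$ and $V\mapsto F(U,V,\Sigma)$ with the remaining arguments frozen. Hence it suffices to apply \eqref{eq:gradient} to each variable on its own Stiefel manifold.

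For the $U$-slot I would set $Y=U$ and take $\nabla F|_U=\dfrac{\partial F}{\partial U}$ as identified in \lemmaref{lemma:MN_rep}; substituting into \eqref{eq:gradient} gives $\mbox{Grad}_U F = 2\,\mbox{skew}\left(\left(\dfrac{\partial F}{\partial U}\right)U^T\right)U$, which is the first line of \eqref{eq:proj}. The $V$-slot is identical after replacing $n$ by $m$, $U$ by $V$, and $\dfrac{\partial F}{\partial U}$ by $\dfrac{\partial F}{\partial V}$. So, once $\mbox{Proj}_{\mathcal{S}_{n,N}}$ is read as the operator carrying a Euclidean gradient to its canonical-metric Riemannian gradient, both displayed identities drop out of \eqref{eq:gradient} and \lemmaref{lemma:MN_rep}.

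To keep the argument self-contained I would then check the two defining properties of the right-hand side. Tangency: writing $A=\dfrac{\partial F}{\partial U}$ and $X=2\,\mbox{skew}(AU^T)U$, the identity $\mbox{skew}(\cdot)^T=-\mbox{skew}(\cdot)$ together with $U^TU=\mathbf{I}$ gives $X^TU+U^TX = -2U^T\mbox{skew}(AU^T)U+2U^T\mbox{skew}(AU^T)U=0$, so $X\in\mathcal{T}_U\mathcal{S}_{n,N}$. Representation: using the normal-space description $\mathcal{N}_U\mathcal{S}_{n,N}=\{US: S\text{ symmetric}\}$, decompose $A$ Euclidean-orthogonally as $A=A_{\mathrm{t}}+US$; since $\inner{US,Z}=0$ for tangent $Z$, one verifies $\inner{A,Z}=g_c(X,Z)$ for all $Z\in\mathcal{T}_U\mathcal{S}_{n,N}$, the weight $\mathbf{I}-\frac{1}{2}UU^T$ in $g_c$ being exactly what the $\mbox{skew}(\cdot)U$ construction compensates for.

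The only place that warrants care — more a caveat than a true obstacle — is this last point: one must not substitute the Euclidean orthogonal projection of $\dfrac{\partial F}{\partial U}$ onto $\mathcal{T}_U\mathcal{S}_{n,N}$, because the canonical metric rescales the directions $UW$ with $W$ skew, and it is precisely the $2\,\mbox{skew}(\cdot)U$ map that realizes the Riesz identity $g_c(\mbox{Grad}_U F,Z)=\inner{\nabla F|_U,Z}$. With tangency and the representation in hand, combining them with the explicit $\dfrac{\partial F}{\partial U}$ and $\dfrac{\partial F}{\partial V}$ from \lemmaref{lemma:MN_rep} yields the stated closed forms.
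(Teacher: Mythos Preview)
Your proposal is correct and matches the paper's approach: the paper states this lemma without proof, treating it as an immediate instantiation of the general formula \eqref{eq:gradient} applied separately to the $U$- and $V$-factors, which is exactly your first two paragraphs. Your additional verification of tangency and the canonical-metric Riesz identity goes beyond what the paper provides, but is fully consistent with it.
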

    
    Taking the negative Riemannian gradient in \eqref{eq:proj}, we define the dynamic system,
    \begin{proposition}
    Consider the dynamic system 
        \begin{equation}\label{eq:dy_sys}
            \left\{
                \begin{array}{ccl}
                    \dfrac{dU}{dt} = -2\mbox{ skew }\left(\left(\dfrac{\partial F}{\partial U}\right)U^T\right)U,\\
                    \\
                    \dfrac{dV}{dt} = -2\mbox{ skew }\left(\left(\dfrac{\partial F}{\partial V}\right)V^T\right)V,\\
                \end{array}
            \right.
        \end{equation}
        where $t$ represents a dimensionless parameter at a given time.\\
        Let $U(0)\in\mathcal{S}_{n, N}$ and $V(0)\in\mathcal{S}_{m, N}$ be the initial data of the differential system \eqref{eq:dy_sys}, Then the trajectories of $U(t), V(t)$ stay on the manifold $\mathcal{S}_{n, N}$ and $\mathcal{S}_{m, N}$ respectively.
    \end{proposition}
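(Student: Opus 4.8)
The plan is to show that the symmetric matrix $U(t)^T U(t)$ is a first integral of the flow \eqref{eq:dy_sys}, so that it is frozen at its initial value $\mathbf{I}_{N\times N}$, and likewise for $V(t)^T V(t)$. Heuristically this is forced by the previous lemma: the velocity $-2\,\mbox{skew}\!\left(\left(\partial F/\partial U\right)U^T\right)U$ is, at each point of $\mathcal{S}_{n,N}$, the negative Riemannian gradient of $F$ and therefore lies in the tangent space $\mathcal{T}_U\mathcal{S}_{n,N}=\{X:\,X^TU+U^TX=0\}$, and a flow whose velocity is everywhere tangent to a level set cannot leave it. The task is to turn this into a self-contained computation that does not presuppose what we are proving.

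First I would settle well-posedness, so that ``the trajectory'' is meaningful. By \lemmaref{lemma:MN_rep}, the right-hand side of \eqref{eq:dy_sys} is a polynomial in the entries of $U$ and $V$ (with coefficients depending smoothly on the fixed datum $\rho$ and on $\Sigma$, which here we regard as fixed or as a prescribed continuous function of $t$); in particular it is locally Lipschitz, so the Picard--Lindel\"of theorem yields a unique maximal solution $\bigl(U(t),V(t)\bigr)$ issuing from $\bigl(U(0),V(0)\bigr)\in\mathcal{S}_{n,N}\times\mathcal{S}_{m,N}$. Everything below is stated on that maximal interval.

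The core step is a short differentiation. Abbreviate $S_U(t):=\mbox{skew}\!\left(\bigl(\partial F/\partial U\bigr)U(t)^T\right)$, which is skew-symmetric for every $t$ merely because $\mbox{skew}(A)=\tfrac12(A-A^T)$, so that the $U$-equation of \eqref{eq:dy_sys} reads $\dot U=-2S_U U$. Set $W(t):=U(t)^TU(t)$; then
\[
\dot W=(\dot U)^T U+U^T\dot U=-2\,U^TS_U^T U-2\,U^TS_UU=2\,U^TS_UU-2\,U^TS_UU=0,
\]
where only the identity $S_U^T=-S_U$ is used. Hence $W$ is constant, so $W(t)\equiv W(0)=\mathbf{I}_{N\times N}$ and $U(t)\in\mathcal{S}_{n,N}$ throughout; replacing $U$ by $V$ and $S_U$ by $S_V:=\mbox{skew}\!\left(\bigl(\partial F/\partial V\bigr)V^T\right)$ gives $V(t)^TV(t)\equiv\mathbf{I}_{N\times N}$ in exactly the same way. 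As a free bonus, $\norm{U(t)}_F^2=\trace{U^TU}=N$ and $\norm{V(t)}_F^2=N$ stay bounded, so the solution remains in a compact set and the maximal interval is all of $[0,\infty)$ (indeed all of $\mathbb{R}$).

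I do not anticipate a real obstacle: once the skew-symmetry of $\mbox{skew}(\cdot)$ is invoked the computation collapses, and the only two points needing a moment's care are (i) recording existence and uniqueness before manipulating $U(t)$ and $V(t)$, since the claim is empty without a solution, and (ii) observing that $\dot W=0$ holds pointwise in $t$ even though $S_U$ is time-dependent and even at matrices with $U^TU\neq\mathbf{I}$ --- it is this stronger ``conservation law'' reading, not merely tangency at the manifold, that keeps the argument free of circularity and supplies global existence at no extra cost.
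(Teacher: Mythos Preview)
Your proof is correct and follows essentially the same route as the paper: differentiate $U^TU$, use that $\mbox{skew}(\cdot)$ produces a skew-symmetric matrix so the cross terms cancel, and conclude $U^TU\equiv\mathbf{I}_{N\times N}$. The paper omits your well-posedness and global-existence remarks and simply asserts that substituting the flow into $g'(t)=\dot U^TU+U^T\dot U$ yields zero, but the underlying computation is identical.
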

    \begin{proof}
        We only prove for $U$, since $V$ can be discussed similarly.\\
        Consider the equation 
        \begin{equation*}
            g(t) = U^T(t)U(t),
        \end{equation*}
        we want to prove $g(t) = g(0) = \mathbf{I}_N$ to ensure the trajectories of $U(t)$ will stay on the manifold.
        Differentiate both sides of the equation, 
        \begin{equation}\label{eq:prop_U}
            g'(t) = \dot{U}^T(t)U(t) + U^T \dot{U}(t),
        \end{equation}
        where $\dot{U}(t) =\dfrac{d}{dt}U(t)$.
        Thus substituting $\dot{U}(t) = -2\mbox{ skew }\left(\left(\dfrac{\partial F}{\partial U}\right)U^T\right)U$ to \eqref{eq:prop_U}, we have $g'(t) = 0$ which shows $g(t) = g(0)$. Because the initial data starts on the Stiefel manifold, in which $g(0) = U^T(0) U(0) =\mathbf{I}_N$ and together with $g'(t)=0 $, we conclude $g(t) = g(0) = \mathbf{I}_N$.\\
    \end{proof}
    Third, as a dynamic system on Stiefel manifolds is defined by the proposition above,
    \begin{equation}\label{eq:gradient_flow}
        \left\{
            \begin{array}{ccl}
                \dfrac{dU}{dt} & = &  -4\left(\mbox{ skew }\left(\left(-\tilde{M}+\hat{M}\right)U^T\right)U\right),\\
                \\
                \dfrac{dV}{dt} & = & -4\left(\mbox{ skew }\left(\left(-\tilde{N}+\hat{N}\right)U^T\right)V\right),\\
            \end{array}
        \right.
    \end{equation}
    The stationary points $U$ and $V$ of the flow \eqref{eq:gradient_flow}, which are the minimizers of the objective function are discussed by the following lemma.
    \begin{lemma}
        Let $U\mbox{ and } V$ be the stationary points. Then
        \begin{equation}
            \left\{
                \begin{array}{cll}
                    U^T(\tilde{M}-\hat{M})-(\tilde{M}-\hat{M})^TU &=& 0,\\
                    \\
                    V^T(\tilde{N}-\hat{N})-(\tilde{N}-\hat{N})^TV &=& 0 .
                \end{array}
            \right.
        \end{equation}
    \end{lemma}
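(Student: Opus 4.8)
The plan is to unravel the definition of a stationary point. By definition, $U$ and $V$ are stationary for the flow \eqref{eq:gradient_flow} exactly when $\dfrac{dU}{dt}=0$ and $\dfrac{dV}{dt}=0$, i.e. when the Riemannian gradients $\mbox{Grad}_U F$ and $\mbox{Grad}_V F$ vanish. Using the previous lemma together with \eqref{eq:proj} and \lemmaref{lemma:MN_rep}, and recalling $\dfrac{\partial F}{\partial U} = -2(\tilde M-\hat M)$, the first equilibrium condition reads $\mbox{skew}\big((\tilde M-\hat M)U^T\big)U = 0$ (the overall scalar $-4$ and the sign flip inside $\mbox{skew}$ do not affect the zero set), and analogously $\mbox{skew}\big((\tilde N-\hat N)V^T\big)V = 0$ for the $V$-component, with $V^T$ in place of $U^T$.

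The second and only genuinely non-mechanical step is to pass from these identities to the stated ones. The subtlety is that $U\in\mathcal{S}_{n,N}$ is a tall matrix, so from $\mbox{skew}(A)U=0$ with $A:=(\tilde M-\hat M)U^T\in\mathbb{R}^{n\times n}$ one cannot simply cancel $U$ on the right. Instead I would contract on the \emph{left} by $U^T$ and use $U^TU=\mathbf{I}_N$: since $U^T\,\mbox{skew}(A)\,U = \mbox{skew}(U^TAU)$ and $U^TAU = U^T(\tilde M-\hat M)U^TU = U^T(\tilde M-\hat M)$, we obtain $\mbox{skew}\big(U^T(\tilde M-\hat M)\big)=0$, that is, $U^T(\tilde M-\hat M)$ is symmetric. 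Writing out $\big(U^T(\tilde M-\hat M)\big)^T = (\tilde M-\hat M)^TU$ gives precisely $U^T(\tilde M-\hat M)-(\tilde M-\hat M)^TU=0$. The identical two lines applied to the $V$-equation, with driving term $\mbox{skew}\big((\tilde N-\hat N)V^T\big)V$ and $V^TV=\mathbf{I}_N$, yield $V^T(\tilde N-\hat N)-(\tilde N-\hat N)^TV=0$.

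I expect the computation to be entirely routine; the "hard part" is only conceptual, namely to notice that the tallness of $U$ forbids a right-cancellation and to perform the one-sided contraction by $U^T$ instead, while keeping the sign bookkeeping between $\dfrac{\partial F}{\partial U}$ and $\tilde M-\hat M$ straight. As an optional closing remark, I would note that these manipulations only give a necessary condition: the symmetry of $U^T(\tilde M-\hat M)$ does not in general force $\mbox{skew}\big((\tilde M-\hat M)U^T\big)U=0$ back (that would additionally require $\tilde M-\hat M$ to lie in the column space of $U$), so the lemma should be read as characterizing a set containing all stationary points of \eqref{eq:gradient_flow}.
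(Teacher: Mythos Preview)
Your argument is correct and essentially the same as the paper's: both start from $\mbox{skew}\big((\tilde M-\hat M)U^T\big)U=0$, then contract on the left by $U^{T}$ and use $U^{T}U=\mathbf{I}_N$ to land on the symmetry of $U^{T}(\tilde M-\hat M)$. The paper expands $\mbox{skew}$ before the contraction whereas you invoke the identity $U^{T}\mbox{skew}(A)U=\mbox{skew}(U^{T}AU)$ directly, but this is purely cosmetic; your closing remark that the conclusion is only necessary is a useful addition not present in the paper.
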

    \begin{proof}
        We first consider the flow of $U$, and the result of $V$ can be derived analogously.
        The stationary point of $U$ satisfies
        \begin{eqnarray}\label{eq:nothing}
            0 = \dfrac{dU}{dt} =  -4\mbox{ skew }\left(\left(-\tilde{M}+\hat{M}\right)U^T\right)U,
        \end{eqnarray}
        for a fixed $V$.\\
        Next, by direct calculation,  
        \begin{eqnarray*}
            0 &=& \left(-\tilde{M}+\hat{M}\right)U^TU-U\left(-\tilde{M}+\hat{M}\right)^TU,\\
            &=& \left(-\tilde{M}+\hat{M}\right)-U\left(-\tilde{M}+\hat{M}\right)^TU\\
            &=& U^T\left(-\tilde{M}+\hat{M}\right)-\left(-\tilde{M}+\hat{M}\right)^TU
        \end{eqnarray*}
        and thus completes the proof.
    \end{proof}
    \begin{lemma}[Theorem 2.2 \cite{Absil2005}]\label{lemma:conv_lemma}
        Let $\phi$ be a real analytic function.
        Assume that there exists a $\delta >0$ and a real $\tau$ such that for $t> \tau$, $x(t)$ satisfies the angle condition 
        \begin{equation*}
            \dfrac{d\phi (x(t))}{dt} = \left<\nabla \phi(x(t)), \dot{x}(t)\right> \leq -\delta \norm{\nabla \phi(x(t))}\norm{\dot{x}(t)}
        \end{equation*}
        and a weak decrease condition
        \begin{equation*}
           \dfrac{d}{dt}\phi(x(t)) = 0\Rightarrow \dot{x}(t) = 0.
        \end{equation*}
        Then, either $\lim_{t\rightarrow\infty}\norm{x(t)} = \infty$, or there exists $x^*$ such that $\lim_{t\rightarrow\infty} x(t) = x^*$.
    \end{lemma}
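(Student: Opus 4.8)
The plan is to derive the stated dichotomy from the \L{}ojasiewicz gradient inequality, the classical device for upgrading accumulation of a gradient-like trajectory on the critical set into convergence to a single point. Suppose $\lim_{t\to\infty}\norm{x(t)}\neq\infty$; then there exist $M<\infty$ and a sequence $t_k\to\infty$ with $\norm{x(t_k)}\le M$, and, passing to a subsequence, we may assume $x(t_k)\to x^*$ for some $x^*$. The angle condition gives $\tfrac{d}{dt}\phi(x(t))\le 0$ for $t>\tau$, so $\phi(x(t))$ is eventually non-increasing; since $\phi$ is continuous and $\phi(x(t_k))\to\phi(x^*)$, it follows that $\phi(x(t))\downarrow\phi(x^*)$ as $t\to\infty$. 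Subtracting a constant, we may assume $\phi(x^*)=0$, so that $\phi(x(t))\ge 0$ for all large $t$.

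First I would dispose of the degenerate case in which $\phi(x(t_1))=0$ for some finite $t_1$: there $\phi(x(t))\equiv 0$ for all $t\ge t_1$ (it is non-increasing, so $\le\phi(x(t_1))=0$, and it is $\ge$ its limit $0$), hence $\tfrac{d}{dt}\phi(x(t))=0$ on $[t_1,\infty)$, and the weak decrease condition forces $\dot x(t)\equiv 0$ there, so $x$ is eventually constant and equal to $x^*$. Otherwise $\phi(x(t))>0$ for all large $t$, and the \L{}ojasiewicz inequality at $x^*$ --- which is exactly where real analyticity of $\phi$ enters --- provides a ball $B(x^*,r)$, a constant $C>0$, and an exponent $\theta\in(0,\tfrac12]$ with $\phi(x)^{1-\theta}\le C\norm{\nabla\phi(x)}$ for all $x\in B(x^*,r)$; in particular $\nabla\phi$ does not vanish on $B(x^*,r)$.

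The core estimate is obtained by differentiating $\phi(x(t))^{\theta}$ along the flow. On any interval where $x(t)\in B(x^*,r)$, combining the angle condition on $\tfrac{d}{dt}\phi(x(t))$ with the bound $\phi(x(t))^{\theta-1}\ge\bigl(C\norm{\nabla\phi(x(t))}\bigr)^{-1}$ supplied by \L{}ojasiewicz gives
\[
    \frac{d}{dt}\phi(x(t))^{\theta}=\theta\,\phi(x(t))^{\theta-1}\,\frac{d}{dt}\phi(x(t))\;\le\;-\frac{\theta\delta}{C}\,\norm{\dot x(t)}.
\]
Integrating from $t_0$ onward, the arc length $\int_{t_0}^{T}\norm{\dot x(t)}\,dt$ over such an interval is at most $\tfrac{C}{\theta\delta}\,\phi(x(t_0))^{\theta}$, which is finite and can be made arbitrarily small by choosing $\phi(x(t_0))$ small.

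It then remains to run the standard no-escape bootstrap. Choose $k$ large enough that $t_k>\tau$, $\norm{x(t_k)-x^*}<r/2$, and $\tfrac{C}{\theta\delta}\,\phi(x(t_k))^{\theta}<r/2$, which is possible because $x(t_k)\to x^*$ and $\phi(x(t_k))\to 0$. If $x(t)$ ever left $B(x^*,r)$ after $t_k$, then at the first such exit time $s$ we would get $r=\norm{x(s)-x^*}\le\norm{x(t_k)-x^*}+\int_{t_k}^{s}\norm{\dot x(t)}\,dt<r/2+r/2=r$, a contradiction; hence $x(t)\in B(x^*,r)$ for all $t\ge t_k$, the arc-length bound holds on all of $[t_k,\infty)$, and $\int_{t_k}^{\infty}\norm{\dot x(t)}\,dt<\infty$. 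Therefore $x(\cdot)$ is Cauchy as $t\to\infty$ and converges, and the limit is necessarily $x^*$ since the subsequence $x(t_k)$ already tends to $x^*$. The only non-elementary ingredient, and hence the main obstacle, is the \L{}ojasiewicz gradient inequality itself: I would quote it rather than reprove it (it being a deep property of real analytic functions), with everything else reducing to the $\phi^{\theta}$ trick and bookkeeping with the angle and weak-decrease hypotheses.
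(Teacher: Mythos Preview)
The paper does not prove this lemma at all: it is quoted verbatim as Theorem~2.2 of \cite{Absil2005} and used as a black box in the proof of Theorem~\ref{thm:stationary}. Your argument is precisely the \L{}ojasiewicz--Kurdyka approach underlying the cited reference: extract an accumulation point from the bounded subsequence, use monotonicity of $\phi\circ x$ from the angle condition to identify the limiting value, invoke the \L{}ojasiewicz gradient inequality (this is the one place where real analyticity is used), differentiate $\phi(x(t))^{\theta}$ to convert the angle condition into an integrable bound on $\norm{\dot x}$, and close with the standard trapping argument in a small ball. So your proposal is correct and matches the intended proof in the source the paper cites.

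One small inaccuracy worth cleaning up: you write that \L{}ojasiewicz ``in particular'' implies $\nabla\phi$ does not vanish on $B(x^*,r)$. That is too strong---at $x^*$ itself one typically has $\nabla\phi(x^*)=0$. What the inequality actually gives (and what you use) is that $\nabla\phi(x)\neq 0$ whenever $\phi(x)>0$ in the ball, which suffices since your core estimate is applied only along the trajectory in the non-degenerate case $\phi(x(t))>0$.
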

    Note that an inner product on the product space $\mathbb{R}^{n\times N}\times \mathbb{R}^{m\times N}$ can be defined through the Frobenius inner product via the relationship,
    \begin{equation*}
            \inner{(a_1,b_1), (a_2, b_2)} = \inner{a_1, a_2}+\inner{b_1,b_2}.
    \end{equation*}
    where $a_i\in \mathbb{R}^{n\times N}$ and $b_j\in \mathbb{R}^{m\times N}$ for $i,j = 1,2$.
    Based on Lemma \ref{lemma:conv_lemma}, we now show the existence of the stationary points $U^*$ and $V^*$ of the dynamics system \eqref{eq:gradient_flow}.
    \begin{theorem}\label{thm:stationary}
        The dynamic system \eqref{eq:gradient_flow} guarantees the existence of stationary points $U^*$ and $V^*$ such that $\lim_{t\rightarrow\infty} U(t) = U^*$ and $\lim_{t\rightarrow\infty} V(t) = V^*$.
    \end{theorem}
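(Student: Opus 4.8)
The plan is to read the system \eqref{eq:gradient_flow} as the negative Riemannian-gradient flow of $F$ (at fixed $\Sigma$) on the compact product manifold $\mathcal{M}:=\mathcal{S}_{n,N}\times\mathcal{S}_{m,N}$ equipped with the product canonical metric, and then to invoke the \L ojasiewicz-type convergence result of Lemma \ref{lemma:conv_lemma}. First I would record the structural facts. (i) $\mathcal{M}$ is compact: $\mathcal{S}_{n,N}$ is the preimage of $\mathbf I_{N\times N}$ under $U\mapsto U^TU$, hence closed, and it is bounded since $\norm{U}_F^2=N$. (ii) The right-hand side of \eqref{eq:gradient_flow} is polynomial in the entries of $U,V$ (it is built from the structure-defining products, matrix multiplications, $\mathrm{skew}(\cdot)$, and the constant data $\rho,\Sigma$), and by the Proposition it is tangent to $\mathcal{M}$ along $\mathcal{M}$; therefore the solution $(U(t),V(t))$ with initial data on $\mathcal{M}$ exists, is unique, and stays on $\mathcal{M}$ for all $t\ge 0$. (iii) $F=\tfrac12\norm{\rho-(U\odot V)\Sigma(U\odot V)^T}_F^2$ is a polynomial in the entries of $U,V$, hence real analytic, and $\mathcal{M}$ together with the canonical metric is a real-analytic Riemannian manifold.

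Next I would verify the two hypotheses of Lemma \ref{lemma:conv_lemma}. For the descent/weak-decrease property, I would differentiate $t\mapsto F(U(t),V(t))$ along \eqref{eq:gradient_flow}. Writing $G_U:=\partial F/\partial U=-2(\tilde M-\hat M)$ and using $\langle G_U,\mathrm{skew}(G_UU^T)U\rangle=\langle G_UU^T,\mathrm{skew}(G_UU^T)\rangle=\norm{\mathrm{skew}(G_UU^T)}_F^2$ (the symmetric part of $G_UU^T$ is orthogonal to its skew part), one obtains
\[
  \frac{d}{dt}F=-2\norm{\mathrm{skew}(G_UU^T)}_F^2-2\norm{\mathrm{skew}(G_VV^T)}_F^2\le 0 .
\]
In particular $\frac{d}{dt}F=0$ forces $\mathrm{skew}(G_UU^T)=\mathrm{skew}(G_VV^T)=0$, hence $(\dot U,\dot V)=(0,0)$, which is exactly the weak-decrease condition.

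For the angle condition, the point is that \eqref{eq:gradient_flow} is a Riemannian-gradient flow, so the velocity is anti-parallel (in the canonical metric) to the gradient; the only subtlety — and the crux of the proof — is that Lemma \ref{lemma:conv_lemma} is phrased in $\mathbb{R}^n$ with the ambient Euclidean gradient and norm, whereas our trajectory is constrained to $\mathcal{M}$ and uses the canonical metric, and the ambient gradient $\nabla F$ may carry a large normal component even where the Riemannian gradient is small. I would resolve this by passing to a real-analytic coordinate chart $\psi$ of $\mathcal{M}$ around an accumulation point: the pulled-back flow reads $\dot u=-\mathbf G(u)\,\nabla\tilde F(u)$ with $\tilde F=F\circ\psi^{-1}$ analytic and $\mathbf G(u)$ the inverse of the (analytic, positive-definite) metric matrix, whence
\[
  \langle\nabla\tilde F,\dot u\rangle=-\nabla\tilde F^{\,T}\mathbf G\,\nabla\tilde F\le-\frac{\lambda_{\min}(\mathbf G)}{\lambda_{\max}(\mathbf G)}\,\norm{\nabla\tilde F}\,\norm{\dot u},
\]
and on a relatively compact coordinate neighbourhood $\lambda_{\min}(\mathbf G),\lambda_{\max}(\mathbf G)$ are bounded away from $0$ and $\infty$, giving a uniform $\delta>0$ (positive-definiteness of $\mathbf G$ also re-gives weak decrease). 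Here one uses that the canonical metric is uniformly equivalent to the ambient one on the tangent bundle, $\tfrac12\norm{X}_F^2\le g_c(X,X)\le\norm{X}_F^2$, so the constants are genuinely uniform.

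Finally I would assemble the conclusion: compactness of $\mathcal{M}$ keeps $\norm{(U(t),V(t))}$ bounded, so the alternative $\lim_{t\to\infty}\norm{x(t)}=\infty$ in Lemma \ref{lemma:conv_lemma} is excluded, and therefore $(U(t),V(t))\to(U^*,V^*)$ for some $(U^*,V^*)\in\mathcal{M}$. I would add that this limit is a stationary point in the sense of the preceding lemma: since $F$ is bounded below and nonincreasing, $\int_0^\infty\!\big(\norm{\mathrm{skew}(G_UU^T)}_F^2+\norm{\mathrm{skew}(G_VV^T)}_F^2\big)\,dt<\infty$, and by continuity of the integrand along the convergent trajectory its value at $(U^*,V^*)$ must vanish. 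The main obstacle is the middle step — making Lemma \ref{lemma:conv_lemma}, stated for Euclidean curves, apply to a manifold-constrained flow with a non-Euclidean metric; besides the chart/metric-bound bookkeeping above, one must also be careful that the finite-length (\L ojasiewicz) argument near an accumulation point confines the tail of the trajectory to a single chart, or else simply cite the manifold version of the convergence theorem of \cite{Absil2005}.
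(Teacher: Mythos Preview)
Your overall strategy coincides with the paper's: verify the angle and weak-decrease hypotheses of Lemma~\ref{lemma:conv_lemma} and use boundedness of the trajectory to rule out the divergent alternative. The weak-decrease part is essentially the same in both, though your identity $\langle G_U,\mathrm{skew}(G_UU^T)U\rangle=\norm{\mathrm{skew}(G_UU^T)}_F^2$ packages the computation more cleanly than the paper's expansion of $\norm{\dot U}_F^2$ and Cauchy--Schwarz step.

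The genuine difference is in the angle condition. The paper works entirely in ambient Frobenius coordinates: it shows $\langle(\partial_UF,\partial_VF),(\dot U,\dot V)\rangle\le-(\norm{\dot U}_F^2+\norm{\dot V}_F^2)$, then writes $\norm{\dot U}_F=\norm{\partial_UF}_F\cos\theta_1$, $\norm{\dot V}_F=\norm{\partial_VF}_F\cos\theta_2$ with $\theta_i$ the angles between the ambient partial gradients and the tangent spaces, and takes $\delta=\max\{1/\cos^2\theta_1,1/\cos^2\theta_2\}$. You instead pull back to an analytic chart and bound the metric eigenvalues to obtain a uniform $\delta$. Your route is the more robust one: the paper's $\theta_1,\theta_2$ depend on the current point $(U(t),V(t))$, so its $\delta$ is \emph{a priori} time-dependent, and Lemma~\ref{lemma:conv_lemma} requires a fixed $\delta$ for all large $t$; you have correctly identified that the ambient gradient can carry a large normal component, which is exactly the obstruction to making that $\delta$ uniform without further argument. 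What the paper's approach buys is that it stays in concrete matrix coordinates throughout and avoids the chart/tail-trapping bookkeeping you flag at the end; what your approach buys is that the angle inequality is genuinely uniform once the metric bounds are in place, and it connects directly to the manifold version of the convergence theorem in \cite{Absil2005}. Your final paragraph on why the limit is stationary is an addition not present in the paper's proof and is a nice complement.
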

    \begin{proof}
        We set $\left(\dfrac{\partial F}{\partial U}, \dfrac{\partial F}{\partial V}\right) = (f_1, f_2)$.\\
    
        Observe that $\inner{(f_1, f_2), (\dot{U}(t),\dot{V}(t))} = \inner{f_1, \dot{U}(t)}+\inner{f_2, \dot{V}(t)}$.\\
        For $\inner{f_1, \dot{U}(t)}$,
        \begin{eqnarray*}
            \inner{f_1, \dot{U}(t)}  &=& -2\inner{f_1, \mbox{ skew}\left( f_1U^T\right)U}, \\
            &=& -\left(\inner{f_1, f_1} -\inner{f_1, U f_1^T U }\right).
        \end{eqnarray*}
        And we consider 
        \begin{eqnarray*}
            \norm{\dot{U}(t)}_F^2 &=&\inner{f_1, f_1}-2\inner{U f_1^T U, f_1}+\inner{f_1^T U, f_1^T U}\\
            & = & \inner{f_1, f_1}-\inner{U f_1^T U, f_1}+\inner{f_1^T U, f_1^T U}-\inner{U f_1^T U, f_1},\\
            & = & \inner{f_1, f_1}-\inner{U f_1^T U, f_1}+\inner{f_1^T U, f_1^T U}-\inner{ f_1^T U, U^T f_1},\\
            & = & -\inner{f_1, \dot{U}(t)}+\norm{f_1^T U}_F^2-\inner{ f_1^T U, U^T f_1}.
        \end{eqnarray*}
        By Cauchy-Schwartz inequality, 
        \begin{eqnarray*}
            \left| \inner{ f_1^T U, U^T f_1}\right|\leq \norm{f_1^T U}_F \norm{U^T f_1}_F = \norm{f_1^T U}_F^2.
        \end{eqnarray*}
        We then have $\norm{\dot{U}(t)}_F^2 \leq -\inner{f_1, \dot{U}(t)}$.\\
        Similarly, 
        \begin{equation*}
            \norm{\dot{V}(t)}_F^2\leq -\inner{f_2, \dot{V}(t)}.
        \end{equation*}
        Thus 
        \begin{eqnarray*}
            \inner{\left(\dfrac{\partial F}{\partial U}, \dfrac{\partial F}{\partial V}\right), \left(\dot{U}(t), \dot{V}(t)\right)} \leq -\left(\norm{\dot{U}(t)}_F^2+\norm{\dot{V}(t)}_F^2\right),
        \end{eqnarray*}
        By definition, 
        \begin{eqnarray*}
            \norm{\left(\dfrac{\partial F}{\partial U}, \dfrac{\partial F}{\partial V}\right)}_F &=& \sqrt{\norm{f_1}_F^2+\norm{f_2}_F^2},\\
            \norm{\left(\dot{U}(t), \dot{V}(t)\right)}_F &=& \sqrt{\norm{\dot{U}(t)}_F^2+\norm{\dot{V}(t)}_F^2}.\\
        \end{eqnarray*} 
        By arithmetic-geometric mean inequality, 
        \begin{eqnarray*}
            \norm{\left(\dfrac{\partial F}{\partial U}, \dfrac{\partial F}{\partial V}\right)}_F\norm{\left(\dot{U}(t),
            \dot{V}(t)\right)}_F&\leq&\dfrac{1}{2}\left(\norm{f_1}_F^2+\norm{f_2}_F^2+\norm{\dot{U}(t)}_F^2+\norm{\dot{V}(t)}_F^2\right).
        \end{eqnarray*}
        Since $\dot{U}(t)$ and $\dot{V}(t)$ are on the tangent spaces $\mathcal{T}_U\mathcal{S}_{n,N}\mbox{, and } \mathcal{T}_V\mathcal{S}_{m,N}$ respectively, then we have,
        \begin{eqnarray*}
            \norm{\dot{U}(t)}_F = \norm{f_1}_F\cos \theta_1,\\
            \norm{\dot{V}(t)}_F = \norm{f_2}_F\cos \theta_2,
        \end{eqnarray*}
        for some $\theta_1, \theta_2\in\left[0, \dfrac{\pi}{2}\right)$.\\
        Henceforth, by choosing $\delta = \max\left\{\dfrac{1}{\cos^2 \theta_1}, \dfrac{1}{\cos^2 \theta_2}\right\}$, and then we have
        \begin{eqnarray*}
            \norm{\left(\dfrac{\partial F}{\partial U}, \dfrac{\partial F}{\partial V}\right)}_F\norm{\left(\dot{U}(t),
            \dot{V}(t)\right)}_F&\leq&\dfrac{\delta+1}{2}\left(\norm{\dot{U}(t)}_F^2+\norm{\dot{V}(t)}_F^2\right).
        \end{eqnarray*}
        \begin{eqnarray*}
            \inner{\left(\dfrac{\partial F}{\partial U}, \dfrac{\partial F}{\partial V}\right), \left(\dot{U}(t), \dot{V}(t)\right)} &\leq& -\left(\norm{\dot{U}(t)}_F^2+\norm{\dot{V}(t)}_F^2\right),\\
            &\leq& -\dfrac{2}{1+\delta} \norm{\left(\dfrac{\partial F}{\partial U}, \dfrac{\partial F}{\partial V}\right)}_F\norm{\left(\dot{U}(t), \dot{V}(t)\right)}_F.
        \end{eqnarray*}
        To prove the weak decrease condition, 
        \begin{eqnarray*}
            &&\inner{\left(\dfrac{\partial F}{\partial U}, \dfrac{\partial F}{\partial V}\right), \left(\dot{U}(t), \dot{V}(t)\right)} = 0, \\
            \Rightarrow &&-\norm{\dot{U}(t)}_F^2-\norm{\dot{V}(t)}_F^2 = 0,\\
            \Rightarrow &&\dot{U}(t) = \dot{V}(t) = 0.
        \end{eqnarray*}
        Finally, because $\norm{U}_F = \norm{V}_F = N\leq \infty$, we guarantee the existence of stationary points $U^*$ and $V^*$ from Lemma \ref{lemma:conv_lemma}.
    \end{proof}
    \subsection{Sum-to-One Property}
     The fundamental property of quantum mechanics requires eigenvalues in a density matrix must be positive and sum to one. Therefore the flow to $\theta_i$ must admit this rule and is generated by the idea in \cite{Moody2022}. \\
    Consider the constraint to $\theta_i$,
    \begin{equation}\label{eq:sum_to_1}
        \displaystyle\sum_{i=1}^N\theta_i(t) = 1\Rightarrow \displaystyle\sum_{i=1}^N\dfrac{d\theta_i(t)}{dt} = 0 \mbox{ for any } t\geq 0.
    \end{equation}
    Denote 
    $e_i = \theta_i-\inner{(\mathbf{x_i}\otimes \mathbf{y_i})(\mathbf{x_i} \otimes \mathbf{y_i})^T,\rho}$ and restrict the initial eigenvalues $\left\{\theta_i(0)\right\}_{i=1}^N$ such that they are positive and satisfy $\displaystyle\sum_{i=1}^N\theta_i(0) = 1$.  The flow of $\theta_i$ is reformulated as
    \begin{equation}\label{eq:eigenflow}
        \dfrac{d\theta_i}{dt} = -\left(e_i-\dfrac{1}{N}\displaystyle\sum_{j=1}^N e_j\right),
    \end{equation}
    Finally, we have the flow,
    \begin{equation}\label{eq:flow}
        \left\{
            \begin{array}{ccl}
                \dfrac{d\theta_i}{dt} & = &-\left(e_i-\dfrac{1}{N}\displaystyle\sum_{j=1}^N e_j\right),  \\
                \dfrac{dU}{dt} & = &  -4\mbox{ skew }\left(\left(-\tilde{M}+\hat{M}\right)U^T\right)U,\\
                \\
                \dfrac{dV}{dt} & = & -4\mbox{ skew }\left(\left(-\tilde{N}+\hat{N}\right)V^T\right)V.
            \end{array}
        \right.
    \end{equation}
    We then must demonstrate the objective function \eqref{eq:objective_function} is decreasing through the trajectory of \eqref{eq:flow} with given initial data $\left(U(0), V(0), \Sigma(0)\right)$.
    \begin{lemma}
        The objective function \eqref{eq:objective_function} is decreasing along the trajectory of \eqref{eq:flow}.
    \end{lemma}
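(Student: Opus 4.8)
The plan is to differentiate $F$ along the flow \eqref{eq:flow} and show that each of the three variable blocks --- the Stiefel matrix $U$, the Stiefel matrix $V$, and the weight block $\Sigma=\mathrm{diag}(\theta_1,\dots,\theta_N)$ --- contributes a nonpositive term. Since $F$ is a polynomial in the entries of $(U,V,\Sigma)$ it is $C^\infty$, so along the smooth trajectory produced by \eqref{eq:flow} the chain rule gives
\begin{equation*}
\frac{d}{dt}F\big(U(t),V(t),\Sigma(t)\big)=\inner{\tfrac{\partial F}{\partial U},\dot U}+\inner{\tfrac{\partial F}{\partial V},\dot V}+\sum_{i=1}^N\frac{\partial F}{\partial \theta_i}\,\dot\theta_i,
\end{equation*}
and it suffices to estimate the three summands separately.

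For the $U$-block I would reuse the estimate already established inside the proof of Theorem~\ref{thm:stationary}: with $f_1=\partial F/\partial U$ the $U$-equation of \eqref{eq:flow} reads $\dot U=-2\,\mbox{skew}(f_1U^T)U$ (consistently with $f_1=-2(\tilde M-\hat M)$ from \lemmaref{lemma:MN_rep}), and expanding $\norm{\dot U}_F^2$ and bounding the cross term $\inner{f_1^TU,U^Tf_1}$ by Cauchy--Schwarz gives $\inner{f_1,\dot U}\le-\norm{\dot U}_F^2\le0$. The identical computation for $V$ gives $\inner{\partial F/\partial V,\dot V}\le-\norm{\dot V}_F^2\le0$. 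Hence both Stiefel blocks are dissipative, and strictly so away from the set where $\dot U=\dot V=0$.

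For the weight block, \lemmaref{lemma:MN_rep} gives $\partial F/\partial\theta_i=\theta_i-\inner{(\mathbf{x}_i\otimes\mathbf{y}_i)(\mathbf{x}_i\otimes\mathbf{y}_i)^T,\rho}=e_i$, exactly the quantity appearing in \eqref{eq:eigenflow}, while the flow sets $\dot\theta_i=-(e_i-\bar e)$ with $\bar e=\frac1N\sum_je_j$. Therefore $\sum_i\frac{\partial F}{\partial\theta_i}\dot\theta_i=-\sum_ie_i(e_i-\bar e)=-\sum_i(e_i-\bar e)^2\le0$. I would also phrase this structurally: $(e_i-\bar e)_i$ is the orthogonal projection of the Euclidean weight-gradient $(e_i)_i$ onto the hyperplane $\{\sum_i\xi_i=0\}$ --- the tangent space of the constraint $\sum_i\theta_i=1$ --- so this summand equals $-\norm{\mathrm{proj}(\nabla_\Sigma F)}^2\le0$; as a byproduct $\sum_i\dot\theta_i=0$, which re-confirms \eqref{eq:sum_to_1}.

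Adding the three bounds yields $\frac{d}{dt}F\le-\norm{\dot U}_F^2-\norm{\dot V}_F^2-\sum_i(e_i-\bar e)^2\le0$, which is the assertion. I do not expect a genuine obstacle here: the one delicate point is that $\partial F/\partial U$ is the Euclidean (Fr\'echet) derivative whereas $\dot U$ is the negative \emph{Riemannian} gradient, so $\inner{\partial F/\partial U,\dot U}$ is not literally $-\norm{\dot U}_F^2$ but only bounded above by it --- and closing that gap is exactly the Cauchy--Schwarz computation already carried out for Theorem~\ref{thm:stationary}, which I would cite rather than repeat. If strict decrease except at equilibria is also wanted, it follows from the weak-decrease condition verified there together with the variance identity $\sum_i(e_i-\bar e)^2=0\iff e_1=\cdots=e_N$.
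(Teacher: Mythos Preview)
Your proposal is correct and follows essentially the same route as the paper: split $\frac{dF}{dt}$ via the chain rule, bound the $U,V$ contributions by reusing the inequality $\inner{f_1,\dot U}\le-\norm{\dot U}_F^2$ (and its $V$-analogue) established inside Theorem~\ref{thm:stationary}, and handle the $\theta$-block by the nonnegativity of $\sum_i e_i^2-\tfrac1N(\sum_i e_i)^2$. The only cosmetic differences are that the paper cites the angle-condition form from Theorem~\ref{thm:stationary} rather than the intermediate bound you use, and phrases the weight inequality as Cauchy--Schwarz rather than your equivalent variance identity $\sum_i e_i(e_i-\bar e)=\sum_i(e_i-\bar e)^2$.
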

    \begin{proof}
        First, consider 
        \begin{eqnarray*}
            \dfrac{dF}{dt} &=& \dfrac{\partial F}{\partial U}\dfrac{dU}{dt}+\dfrac{\partial F}{\partial V}\dfrac{dV}{dt}+\sum_{i=1}^N \dfrac{\partial F}{\partial \theta_i}\dfrac{d\theta_i}{dt}\\
            & \leq & -\dfrac{2}{1+\delta} \norm{\left(\dfrac{\partial F}{\partial U}_F, \dfrac{\partial F}{\partial V}\right)}\norm{\left(\dot{U}(t), \dot{V}(t)\right)}_F+\sum_{i=1}^N \dfrac{\partial F}{\partial \theta_i}\dfrac{d\theta_i}{dt},\\
        \end{eqnarray*}
        which follows from \textbf{Theorem~\ref{thm:stationary}} directly.\\
        Next, by Cauchy-Schwarz inequality,
        \begin{eqnarray*}
            \sum_{i=1}^N \dfrac{\partial F}{\partial \theta_i}\dfrac{d\theta_i}{dt} = -\left(\sum_{i=1}^N e_i^2-\dfrac{1}{N}\sum_{i=1}^N e_i\sum_{j=1}^N e_j\right) \leq 0.
        \end{eqnarray*}
        Finally, we conclude that the objective function $F$ decreases along the flow \eqref{eq:flow}.
    \end{proof}
    
\section{Numerical Implementation}
    Three examples were designed and are used to illustrate \textbf{the descent property of the objective function}, \textbf{the rank adjustment property}, and \textbf{the maintenance of the sum-to-one property}. In these examples, we outline the ability to \textbf{decompose a classical-classical state} and \textbf{approximate the nearest classical-classical state to a given state}. These examples were accomplished using Python 3.8 on a MacBook Pro laptop with an M1 core 3.2-GHz processor and 16 GB of RAM. The \textbf{scipy.integrate.solve_ivp} package was applied to solve the problem of our dynamic system. We generally use \textbf{RK45} as our solver; yet if the system is stiff, \textbf{Radau} is an alternative choice. The aforementioned package allows advance user-defined arguments; for example, \textbf{Events},  \textbf{AbsTol} = $10^{-12}$, and \textbf{RelTol} = $10^{-12}$ help us to determine when the integration should be terminated.\\
    Consider a general bipartite system in our examples, where  $\left\{\mathbf{x}_i\right\}_{i=1}^N\in\mathbb{R}^{n}$ and $\left\{\mathbf{y}_i\right\}_{i=1}^N\in\mathbb{R}^{m}$. $m,n$ are arbitrary positive integers denoting the size of the system, and $N$ with $m,n\geq N$ is a positive integer denoting the number of qubits. \\
    In our examples, we consider general cases and remove the requirement that $n, m$ be the power of 2.\\
    
    \subsection{Example 1}
    This example examines the descent property of the objective function and the ability to decompose a given classical-classical state. A rank-3 synthetic classical-classical state $\rho\in \mathbb{R}^{16}\otimes \mathbb{R}^{8}$ is prepared. The state is composed of two random orthogonal matrices $\mathbf{U}\in\mathcal{S}_{16, 3}$ and  $\mathbf{V}\in\mathcal{S}_{8, 3}$  and $\mathbf{\Sigma} = \mbox{ diag}(\theta_1, \theta_2, \theta_3),$ which stores three random positive eigenvalues and satisfies the sum-to-one property:\\
    \begin{equation*}
        \rho = \left(\mathbf{U}\odot\mathbf{V}\right)\mathbf{\Sigma}\left(\mathbf{U}\odot\mathbf{V}\right)^T.
    \end{equation*}
    \begin{figure}
        \centering
       \begin{subfigure}[htbp]{.6\textwidth}
            \centering\includegraphics[width=\textwidth]{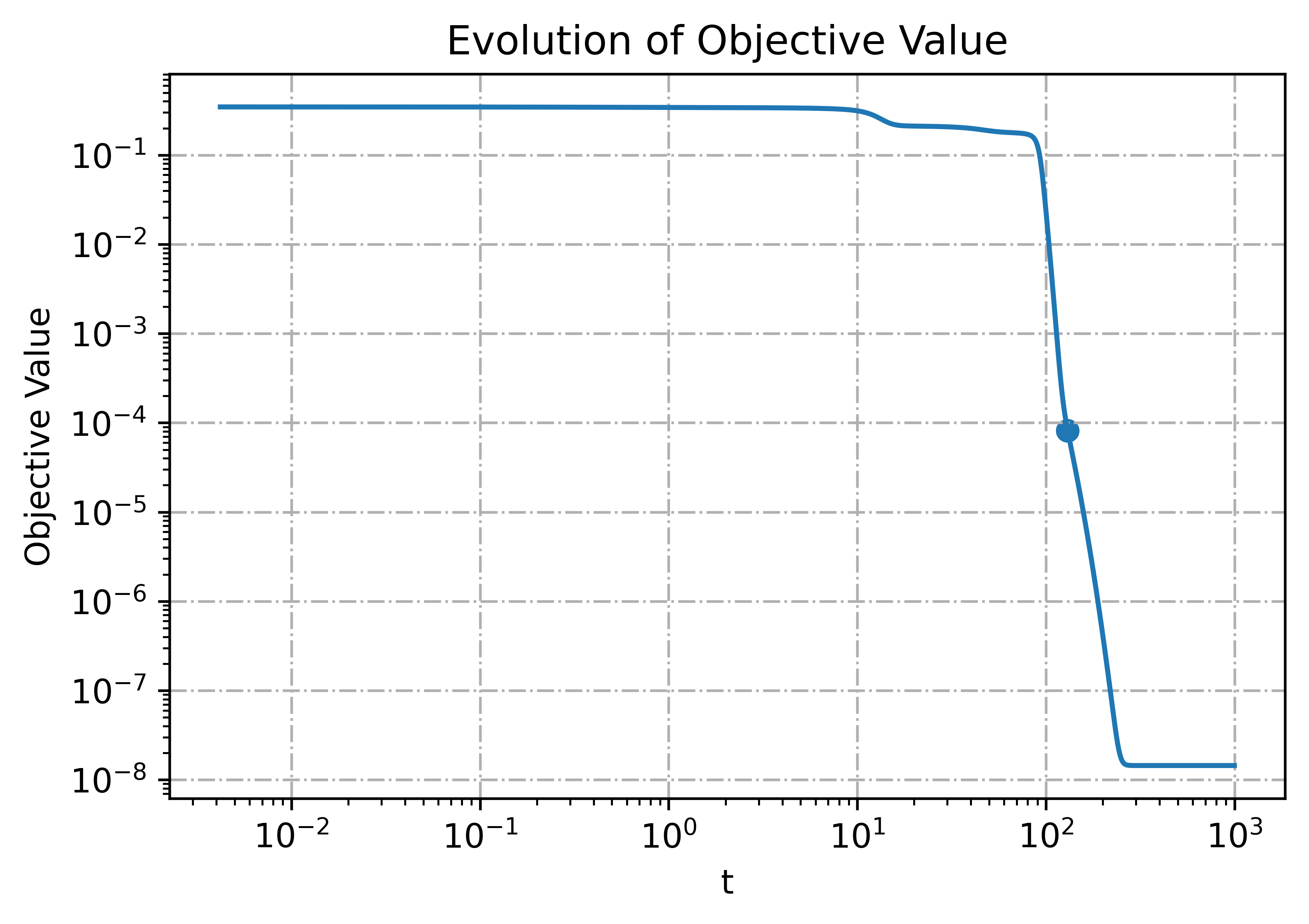}

      \end{subfigure}
      \caption{Descent of objective values}
      \label{fig:descent_obj_val}
    \end{figure}
    \begin{figure}
        \centering
      \begin{subfigure}[htbp]{.6\textwidth}
            \centering\includegraphics[width=\textwidth]{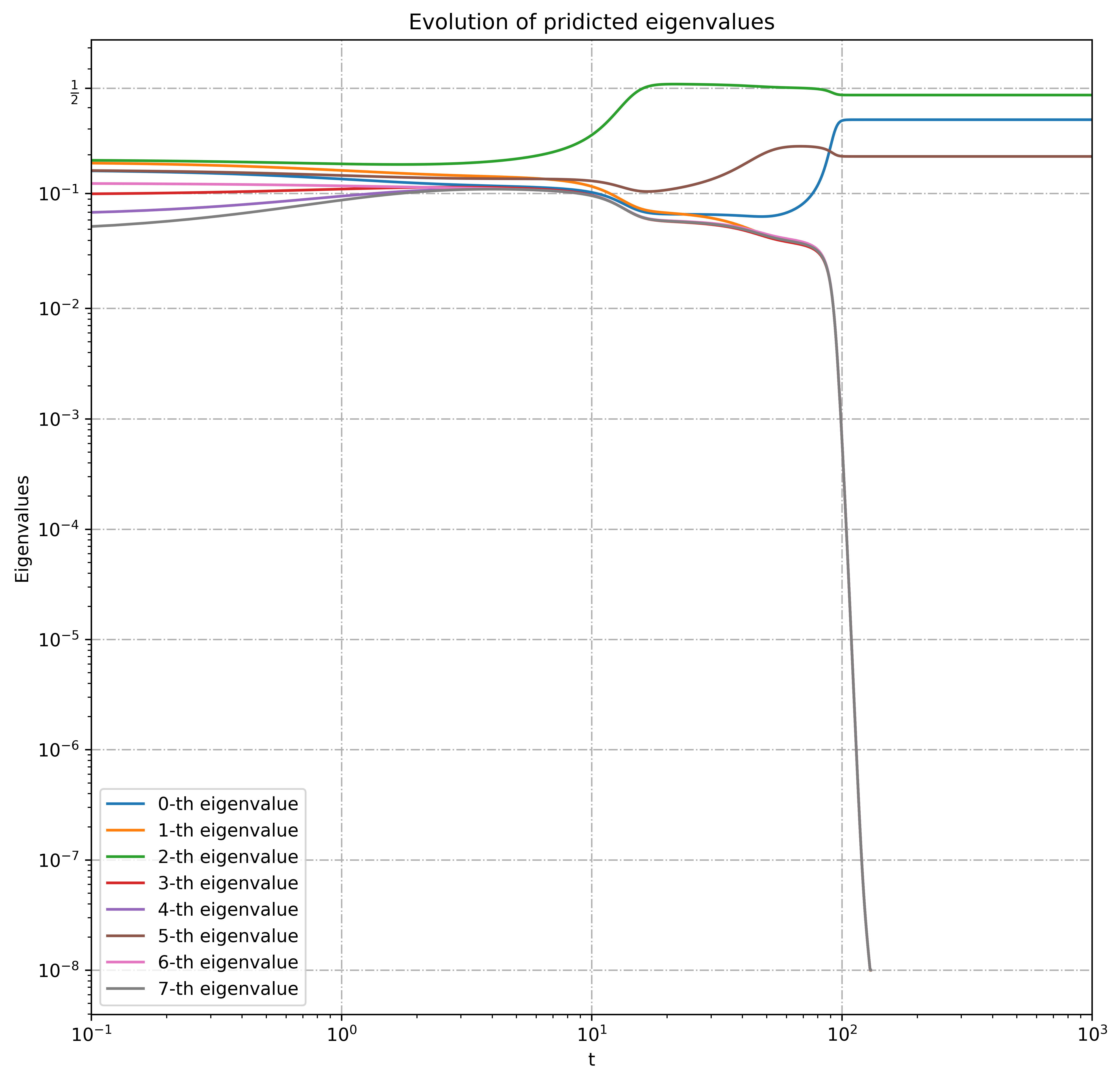} 
            
      \end{subfigure}
      \caption{Evolution of initially guessed eigenvalues}
      \label{fig:eigen_flow}
    \end{figure}
    
    We assume that we have no information on the rank of the synthetic state, and we determine it by our method. The initial data of $U\mbox{ and } V$ are generated by applying singular value decomposition to two random matrices of size $16\times 8$ and $8 \times 8$ respectively, and eight random positive numbers are the initial eigenvalues which satisfy $\sum_{i=1}^8\theta_i = 1$. The objective values and the evolution of the eigenvalues are recorded in each iteration. The objective function decreases along the flow \eqref{eq:flow} in \figref{fig:descent_obj_val}. 
    During the iteration, discard the eigenvalue when it evolves to zero. The iteration is terminated when either the condition \textbf{RelTol} or \textbf{AbsTol} is achieved or the integration successfully reaches the end of the time interval.
    Finally, we observe the objective value nearly descends to zero, which indicates that our algorithm can sufficiently decompose a given classical-classical state. \\
    The evolution of eigenvalues is plotted in \figref{fig:eigen_flow}; each curve reports an evolution process. For \figref{fig:eigen_flow}, eight eigenvalues were prepared initially for the algorithm, but five were discarded during the evolution, leaving only three eigenvalues when terminated. The small window in \figref{fig:eigen_flow} shows the scenarios in which eigenvalues reach zero.\\
    Based on our results, we conclude that our method can descend the objective function and correctly decompose a given classical-classical state. 
    \subsection{Example 2}\label{example2}
    The second example clarifies the algorithm's preservation of the sum-to-one property. We arrange two states; both are generated randomly and are symmetric positive definite matrices with traces equal to 1. One state is of full rank, and the other is of rank 3.
    We denote $\rho_e$ and $\rho\in \mathbb{R}^{40\times 40}$ as full rank and rank 3 matrices, respectively. We approximate these matrices by $U\in \mathcal{S}_{8, 5} $ and$\ V\in \mathcal{S}_{5, 5}$ , and $\left\{\theta_i\right\}_{i=1}^5$. 
    \begin{figure}
       \begin{subfigure}[htbp]{.44\textwidth}
            \centering\includegraphics[width=\textwidth]{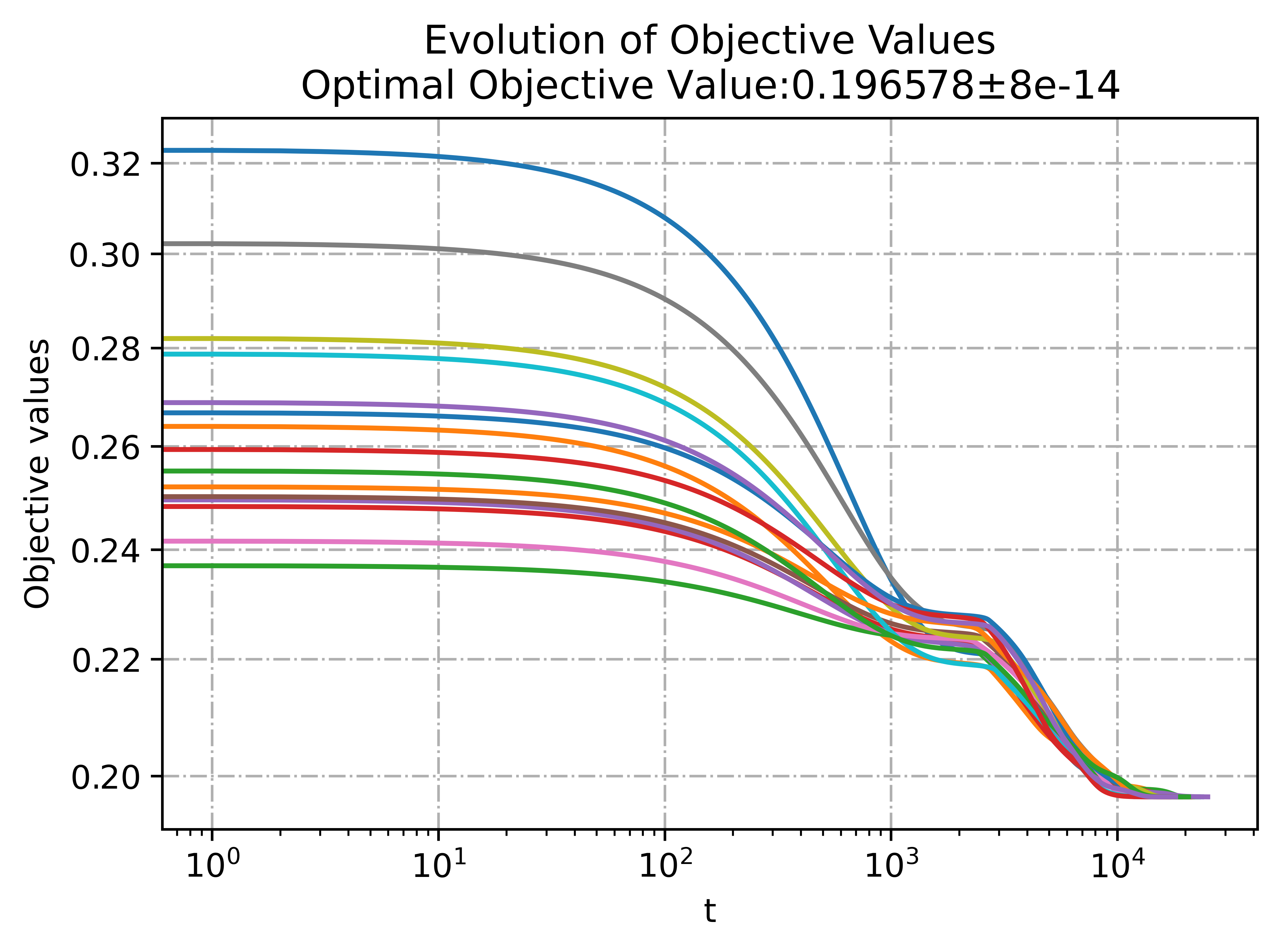}
            \caption{Descent property of objective values in each trial of the full-rank matrix}
            \label{fig:descent_obj_trial_full}
      \end{subfigure}
      \hfill 
      \begin{subfigure}[htbp]{.44\textwidth}
            \centering\includegraphics[width=\textwidth]{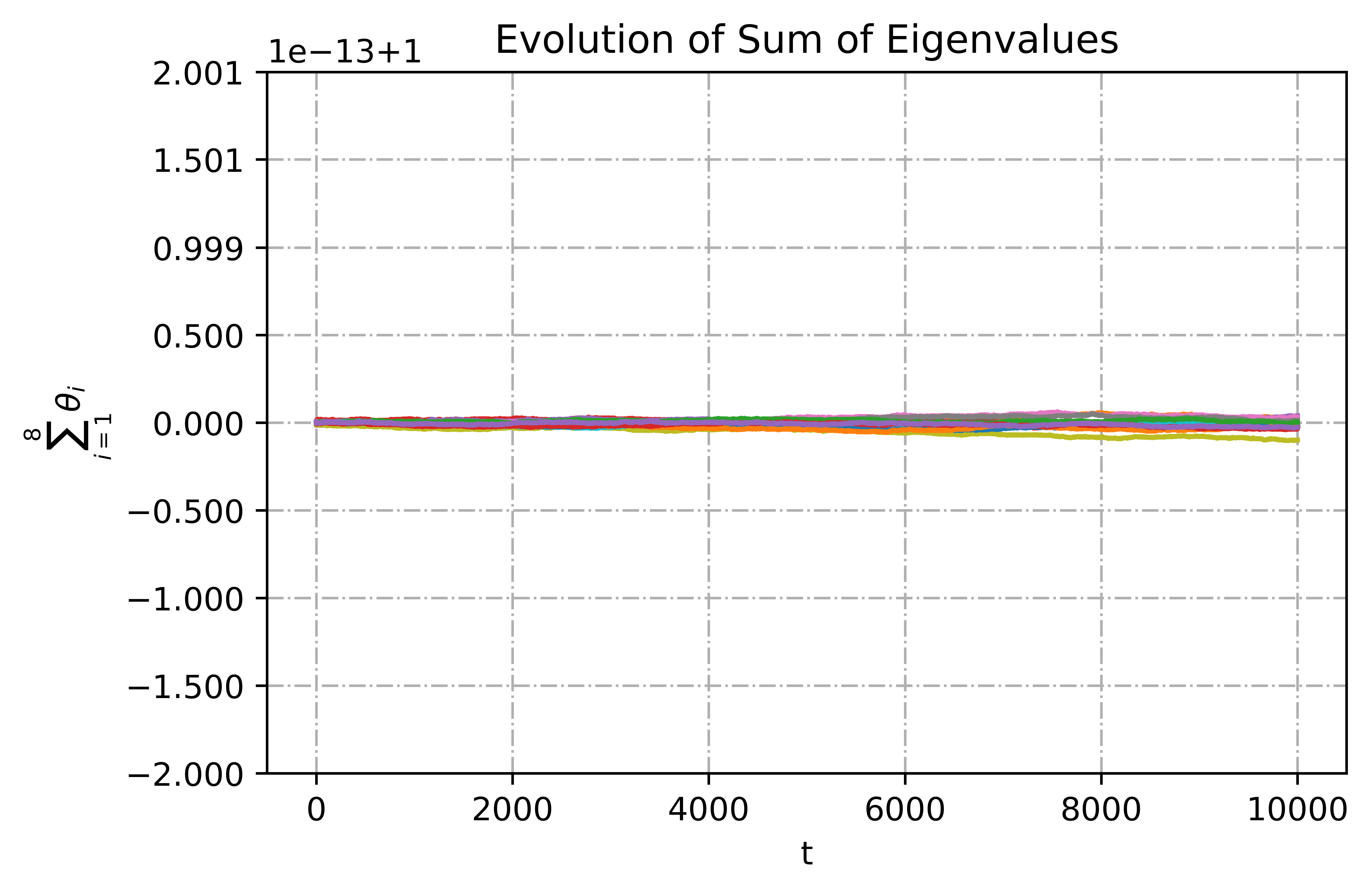} 
            \caption{Sum-to-one preservation of the full-rank matrix}
            \label{fig:eigen_sum_full}
      \end{subfigure}
      \caption{}
    \end{figure}
    
    \begin{figure}
       \begin{subfigure}[!b]{.44\textwidth}
            \centering\includegraphics[width=\textwidth]{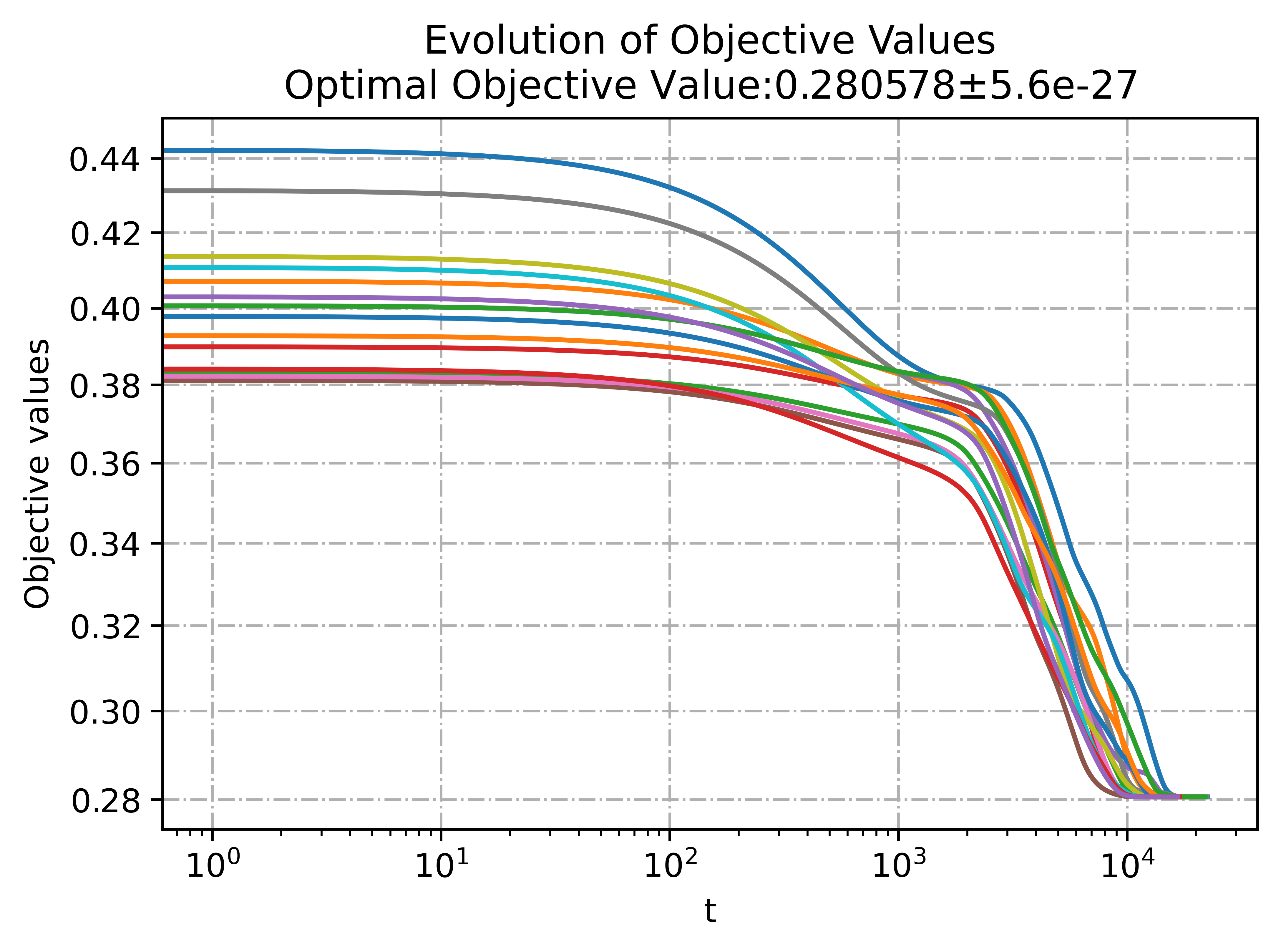}
            \caption{Descent property of objective values in each trial of the non-full-rank matrix}
            \label{fig:descent_obj_trial_non_full}
      \end{subfigure}
      \hfill 
      \begin{subfigure}[!b]{.44\textwidth}
            \centering\includegraphics[width=\textwidth]{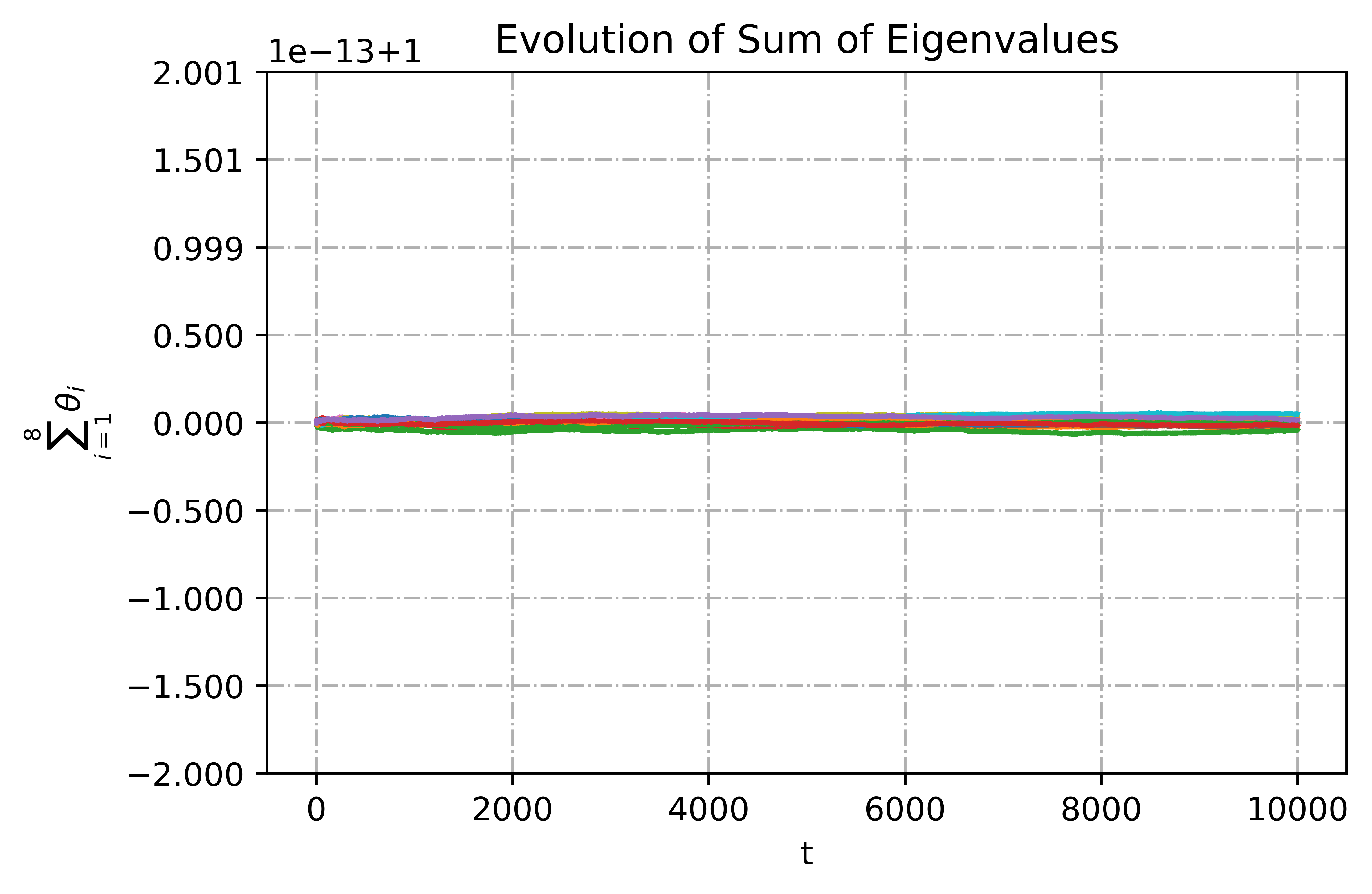} 
            \caption{Sum-to-one preservation of the non-full-rank matrix}
            \label{fig:eigen_sum_non_full}
      \end{subfigure}
      \caption{}
    \end{figure}
    
    The figures of objective values are plotted in the log-log scale.\\
    First, the descent property and the consistency can be verified in \figref{fig:descent_obj_trial_full} and \figref{fig:descent_obj_trial_non_full}. The validation of sum-to-one preservation is presented in \figref{fig:eigen_sum_full} and \figref{fig:eigen_sum_non_full}.
    The results in \figref{fig:descent_obj_trial_full} and \figref{fig:descent_obj_trial_non_full} indicate that the objective function exhibits descending property. Furthermore, we reveal the consistency of our method by demonstrating that the optimal objective values approach nearly the same value with a small standard deviation. This evidence indicates that the minimizer of \eqref{eq:objective_function} can be stably detected through our method.\\
    On the other hand, \figref{fig:eigen_sum_full} and \figref{fig:eigen_sum_non_full} plot the sum of the eigenvalues of each iterative trial of the rank 3 matrix. These figures present oscillations where these outcomes are generated by floating point precision and numerical solvers during numerical integration. However, the oscillations are on the scale of $10^{-12}$; we can still reasonably claim that the sum-to-one property is maintained.\\
    \subsection{Example 3}
    The third example suggests the size and rank of the initial data in the dynamic system. \\
    Consider an arbitrary state $\rho$ of a bipartite system, namely, a positive definite matrix with its trace equal to 1. Let $\rho\in \mathbb{R}^{N\times N}$, and let $N$ be a non-prime number. Define a set $F$ that collects the factors of $N$, $F = \left\{(n, m)|2\leq n,m\in \mathbb{N,} \mbox{ and } nm = N\right\}$ and $r$, the rank of matrices; this set is required to satisfy $r\leq \min\left\{n, m\right\}$. Therefore, initial data for the dynamic system can be chosen from the following collection:
    \begin{eqnarray*}
        \left\{U\in \mathcal{S}_{n, r}, V\in \mathcal{S}_{m, r}|(n, m)\in F, r\leq n\right\}.
    \end{eqnarray*}
    
    \begin{figure} 
       \begin{subfigure}[htbp]{\textwidth}
            \centering\includegraphics[width=\textwidth]{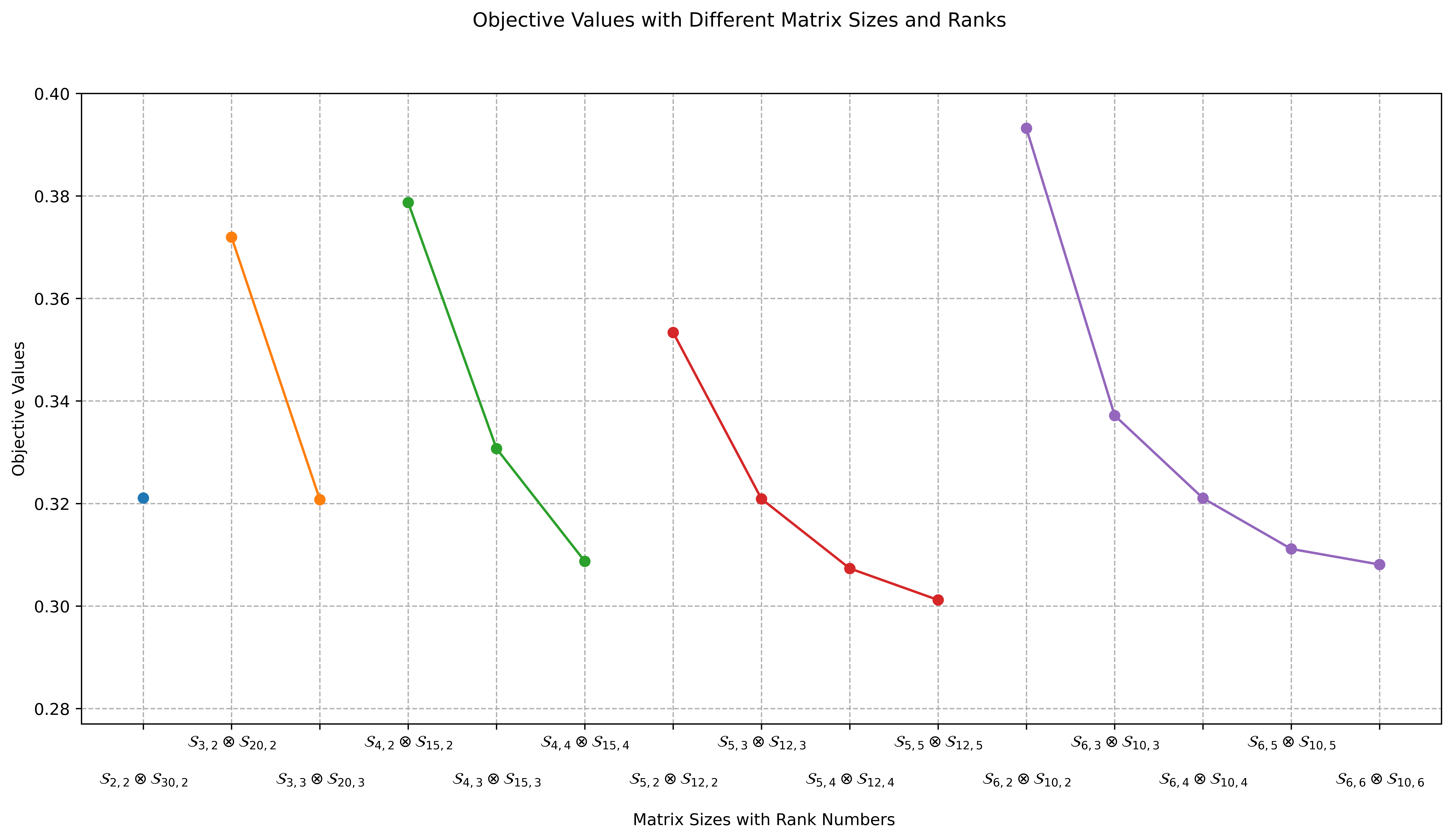}
      \end{subfigure}
    \hfill
       \begin{subfigure}[htbp]{\textwidth}
            \centering\includegraphics[width=\textwidth]{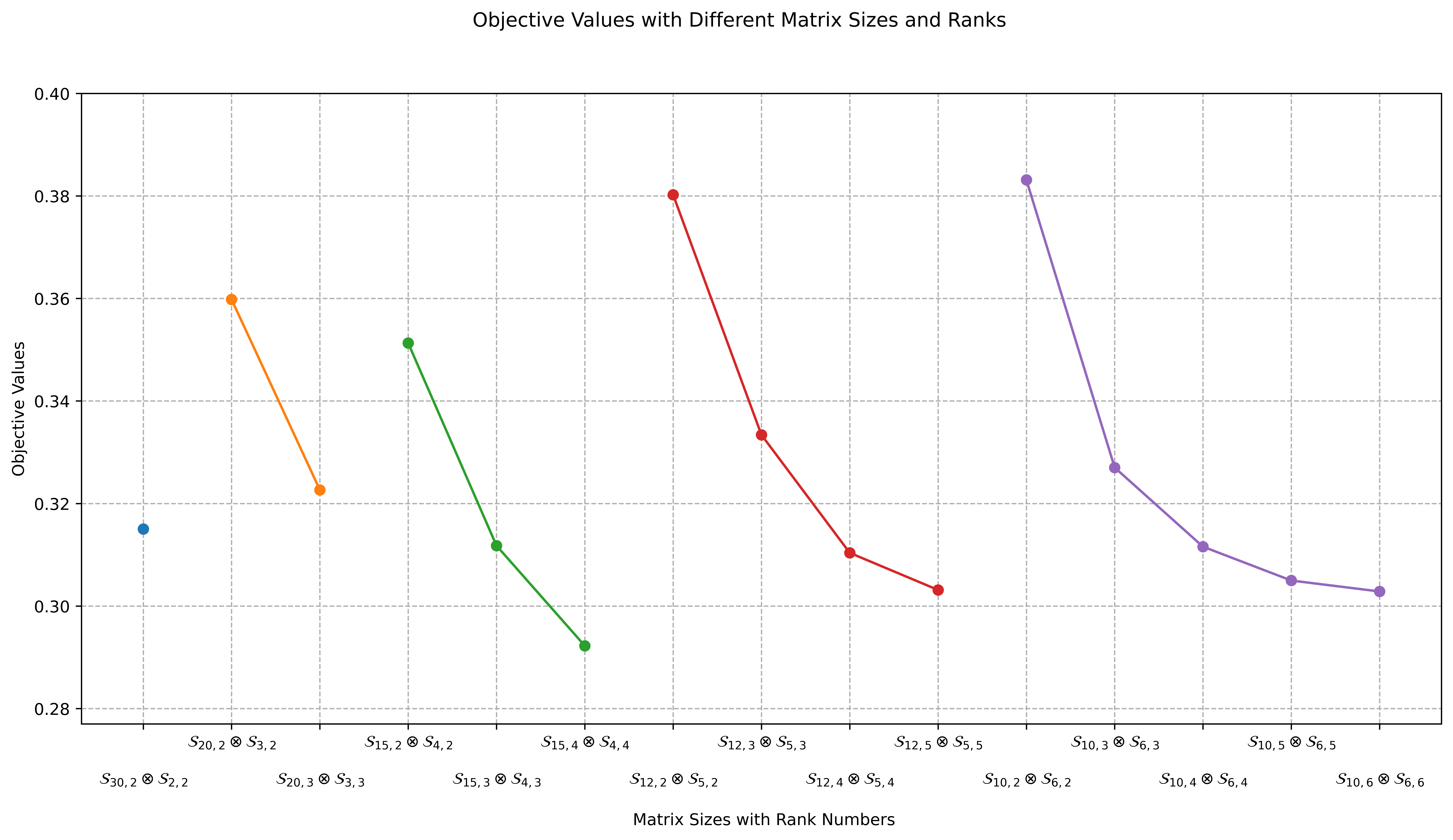}
            
      \end{subfigure}
      \caption{Objective Values with Different Matrix Sizes and Ranks}
      \label{fig:matrix_size}
    \end{figure}
    
    We select the initial guesses from the collection and plot the results in \figref{fig:matrix_size}. The figures indicate that the best approximation is obtained when $U\in \mathcal{S}_{15,4}$ and $V\in\mathcal{S}_{4,4}$. \\
    
    From this example, if a full-rank matrix is in one of the initial data, it will produce the minimal objective value because each case in \figref{fig:matrix_size} indicates that the optimal objective value decreases when the rank of the initial data increases.
\section{Conclusion}
Identifying the nearest classical-classical state offers a means of understanding the degree of quantum correlation, even if quantifying the exact value is an NP-hard problem. In this study, we consider an approximation method to gauge the quantity. The approximation is resolved by solving a gradient-driven descent flow on Stiefel manifolds. Thanks to numerous existing ODE solvers, the aforementioned flow can be solved efficiently. At the end of this article, we discuss the convergence and descent properties theoretically and further provide three examples to illustrate the descent property, the preservation of a sum-to-one property, and the consistency of the objective value.\\

\begin{small}
\textbf{Acknowledgement}\\
This research was supported in part
by the National Center for Theoretical Sciences of Taiwan and
by the Ministry of Science and Technology of Taiwan
under grant 110-2636-M-006-006 and grant 111-2115-M-006-019. We also acknowledge Science College of National Cheng Kung University (NCKU Science) and Ministry of Science and Technology, Taiwan for a fellowship to support BZL’s PhD study
\end{small}
\bibliographystyle{elsarticle-num}
\bibliography{Journal_ver.bib}

\end{document}